\newcommand*{\tran}{^{\mkern-1.5mu\mathsf{T}}}
\newcommand{\de}{\mathrm{d}}
\newcommand{\overbar}[1]{\mkern 1.5mu\overline{\mkern-1.5mu#1\mkern-1.5mu}\mkern 1.5mu}
\newif\ifbackmatter%
\newcommand{\backmatter}{\global\backmattertrue}%
\newtheorem{secremark}{Remark}[section]
\newtheorem{secproposition}{Proposition}[section]
\newtheorem*{example*}{Example}
\newtheorem*{reminder*}{Reminder}
\newtheorem*{remark*}{Remark}
\newtheorem*{note*}{Note}
\begin{document}

\title{A mathematical study of the role of tBregs in breast cancer}


\author[1,2]{Vasiliki Bitsouni\,\orcidlink{0000-0002-0684-0583}\thanks{\texttt{vbitsouni@math.uoa.gr}}}
\author[1,3]{Nikolaos Gialelis\,\orcidlink{0000-0002-6465-7242}\thanks{\texttt{ngialelis@math.uoa.gr}}}
\author[2]{Vasilis Tsilidis\,\orcidlink{0000-0001-5868-4984}\thanks{\texttt{vtsilidis@outlook.com}}}
\affil[1]{Department of Mathematics, National and Kapodistrian University of Athens, Panepistimioupolis, GR-15784 Athens, Greece}
\affil[2]{School of Science and Technology, Hellenic Open University, 18 Parodos Aristotelous Str., GR-26335 Patras, Greece}
\affil[3]{School of Medicine, National and Kapodistrian University of Athens, Athens,GR-115275, Greece}


\date{}

\maketitle

\begin{abstract}
A model for the mathematical study of immune response to breast cancer is proposed and studied, both analytically and numerically. It is a simplification of a complex one, recently introduced by two of the present authors. It serves for a compact study of the dynamical role in cancer promotion of a relatively recently described subgroup of regulatory B cells, which are evoked by the tumour. 
\end{abstract}


\textbf{Keywords:} Cancer Immunology, tumour-evoked regulatory B cell, Mathematical Modelling, Stability Analysis, Bifurcation Analysis, numerical simulation\newline\\
\textbf{MSC:}  34A34, 37G10, 37M05, 37N25, 92B05, 92D25, 93A30.


\maketitle

\section{Introduction} \label{sec1} 

The immune system, divided into the major components of innate and adaptive immunity, grants an organism the ability to detect invading pathogens and transformed cells (e.g. cancer cells) - namely differentiate between ``self" and ``non-self" - and to eliminate them. The provided type and measure of inflammation fit the circumstances, i.e., the immune system dynamically adjust the induced inflammatory response \cite{segel}. The latter function of the immune system is called \textit{immune regulation}. The suppressing mechanisms for immune regulation act as a double-edged sword: on the one hand, they prevent from autoimmune diseases, but on the other hand, they inhibit immune response against cancer \cite{narendra2013immune}. Indeed, it is currently accepted that an aberrant innate and adaptive immune response contributes to tumorigenesis by selecting aggressive clones, inducing immunosuppression, and stimulating cancer cell proliferation and metastasis \cite{gonzalez2018roles}.

The immune system primarily consists of certain types of white blood cells, called lymphocytes. Innate immunity includes lymphocytes such as natural killer (NK) cells, while adaptive immunity includes T and B lymphocytes. Immune regulation involves a special subset of T cells, called regulatory T cells (Tregs) \cite{chraa}. B cells also play a part in immune regulation in the form of regulatory B cells (Bregs) \cite{rosser2015regulatory}. Even though the research on Cancer Immunology has been mainly concentrated on Tregs and NK cells, recently there has been an upsurge in research on Bregs associated with tumours (see, e.g., \cite{guo} and the references therein). 

In this work, we focus on \textit{breast cancer} and we study the role of a relatively recently described \cite{olkhanud2009breast,olkhanud2011tumor} subgroup of Bregs, the \textit{tumour-evoked Bregs} (\textit{tBregs}), in the immune response to the aforementioned type of cancer. \textit{Breast cancer} is more than a hundred times more likely to affect a woman than a man \cite{fentiman2006male} and it is the most common cause of cancer-related death in women world-wide \cite{bray2018global,ferlay2019estimating}. It is estimated that one in eight women living in the United States will develop breast cancer in her lifetime \cite{carol2014breast}. \textit{tBregs} protect metastasizing breast cancer cells from immune effector cells by inducing immune suppression mediated by Tregs \cite{biragyn2014generation}. In a few words, breast cancer seems to generate tBregs, which in turn increase the proliferation of Tregs, which are responsible for the death of the tumour-lysing NK cells, leading eventually to lung metastasis \cite{olkhanud2011tumor}. 

Our study is based on a \textit{dynamical system of non-linear ordinary differential equations}. The theory of ordinary differential equations and dynamical systems has been deployed by scientists in order to gain new insights about the complex interactions between cancer and the immune system. To this end, several models have been constructed for the study  of different entities concerning cancer-immune interactions such as cancer cells, normal cells, cytolytic cells, regulatory cells, cytokines, cancer cells at different stages as well as various therapies \cite{kuznetsov1994, Nani_Freedman_2000, depillis2003mathematical,villasana2003delay, byrne2004macrophage,khailaie2013mathematical, lopez2014validated,denbreems2016, al2020modeling,senekal2021natural}.

The proposed model is a \textit{simplification} of a complex one recently introduced in \cite{bitsouni2021mathematical}. In general, the complexity of the models appeared in bibliography vary greatly. On the one hand, lower dimensional and with simpler functional responses models are advantageous in that they allow for the straightforward application of mathematical techniques, including those of stability and bifurcation analysis \cite{szymanska2003,bunimovich2007, phan2017role,ghosh2018mathematical}. However, their low dimension inhibits the simultaneous investigation of more complex mechanisms that involve multiple components of the tumour micro-environment. On the other hand, higher dimensional models with complex functional responses are more realistic \cite{dePillis2009,dePillis2013,he2017mathematical, makhlouf2020mathematical}. However, they are much more difficult to analyse and can mainly be studied through numerical methods. Nevertheless, both approaches have led to significant insights into the biological mechanisms governing tumour-immune system interactions \cite{eftimie2011interactions}. 

The present study is organised as follows: In \hyperref[ModelFormulation]{Section \ref*{ModelFormulation}}, we construct the mathematical model governing the interactions between breast cancer cells, NK cells, Tregs and tBregs, for which we prove some basic properties of its solution, necessary for ensuring the model's biological relevance, in \hyperref[AppendixSecPropSol]{Appendix \ref*{AppendixSecPropSol}}. In \hyperref[StabAn]{Section \ref*{StabAn}}, we conduct local stability analysis using the linearisation theorem, as well as the centre manifold theorem for the cases in which our equilibrium points are non-hyperbolic. In  \hyperref[BifAn]{Section \ref*{BifAn}}, we investigate the local bifurcations of our model. In \hyperref[reducedModelNumericalSimulationsSection]{Section \ref*{reducedModelNumericalSimulationsSection}}, we perform numerical simulations in order to confirm our qualitative results and further analyse our model. Finally, in \hyperref[Discussion]{Section \ref*{Discussion}}, we sum up and review our results.

\section{Model Formulation} \label{ModelFormulation}
In this section, we develop a relatively simple mathematical model in an attempt to study the interactions between breast cancer cells, NK  cells, Tregs and tBregs. 

Such a model is a simplification of a complex one, recently introduced in \cite{bitsouni2021mathematical}, where seven types of interacting cells are considered, i.e. breast cancer cells, NK  cells, Tregs and tBregs, CD8$^+$ T cells, non-Treg CD4$^+$ T cells, non-Treg CD4$^+$ T cells and non-tBreg B cells. A schematic representation of the interactions between these seven types of cells is given in \hyperref[Fig1]{Figure \ref*{Fig1}}.
\begin{figure}[t]
 \makebox[\textwidth][c]{\includegraphics[width=1.0\textwidth]{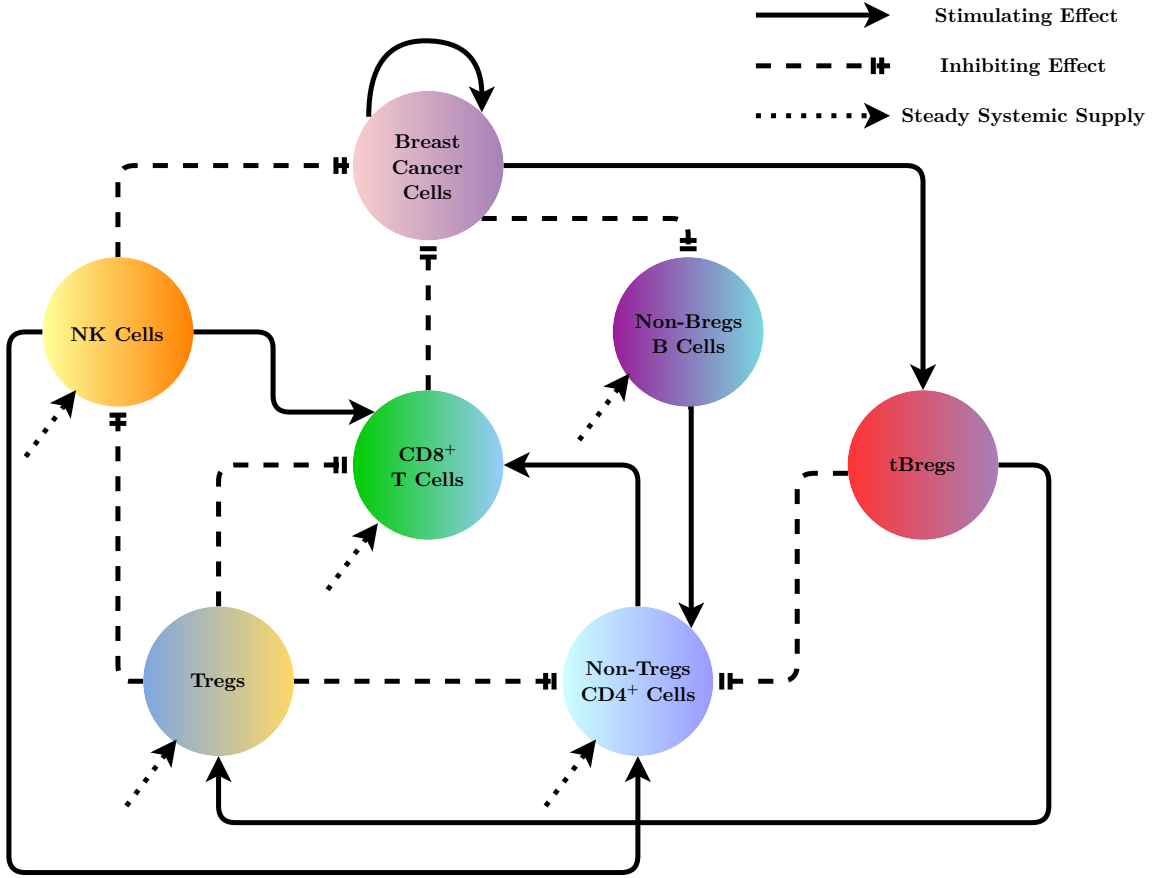}}%
\caption{Interactions between the cells in the model described in \cite{bitsouni2021mathematical}. Solid line (---): Stimulating effect. Dashed line (- -): Inhibiting effect. Dotted line ($\cdot \cdot \cdot$): Steady systemic supply. Figure adapted from \cite{bitsouni2021mathematical}, with the inclusion of the present, yet previously non-depicted, steady systemic supply.}
\label{Fig1}
\end{figure}
As a first step towards a simplified model of \cite{bitsouni2021mathematical}, we follow \cite{olkhanud2011tumor} and we consider only the interactions between breast cancer cells, NK  cells, Tregs and tBregs, which are now depicted in \hyperref[Fig2]{Figure \ref*{Fig2}}, omitting the rest of the interactions.
\begin{figure}[t]
 \makebox[\textwidth][c]{\includegraphics[width=1.0\textwidth]{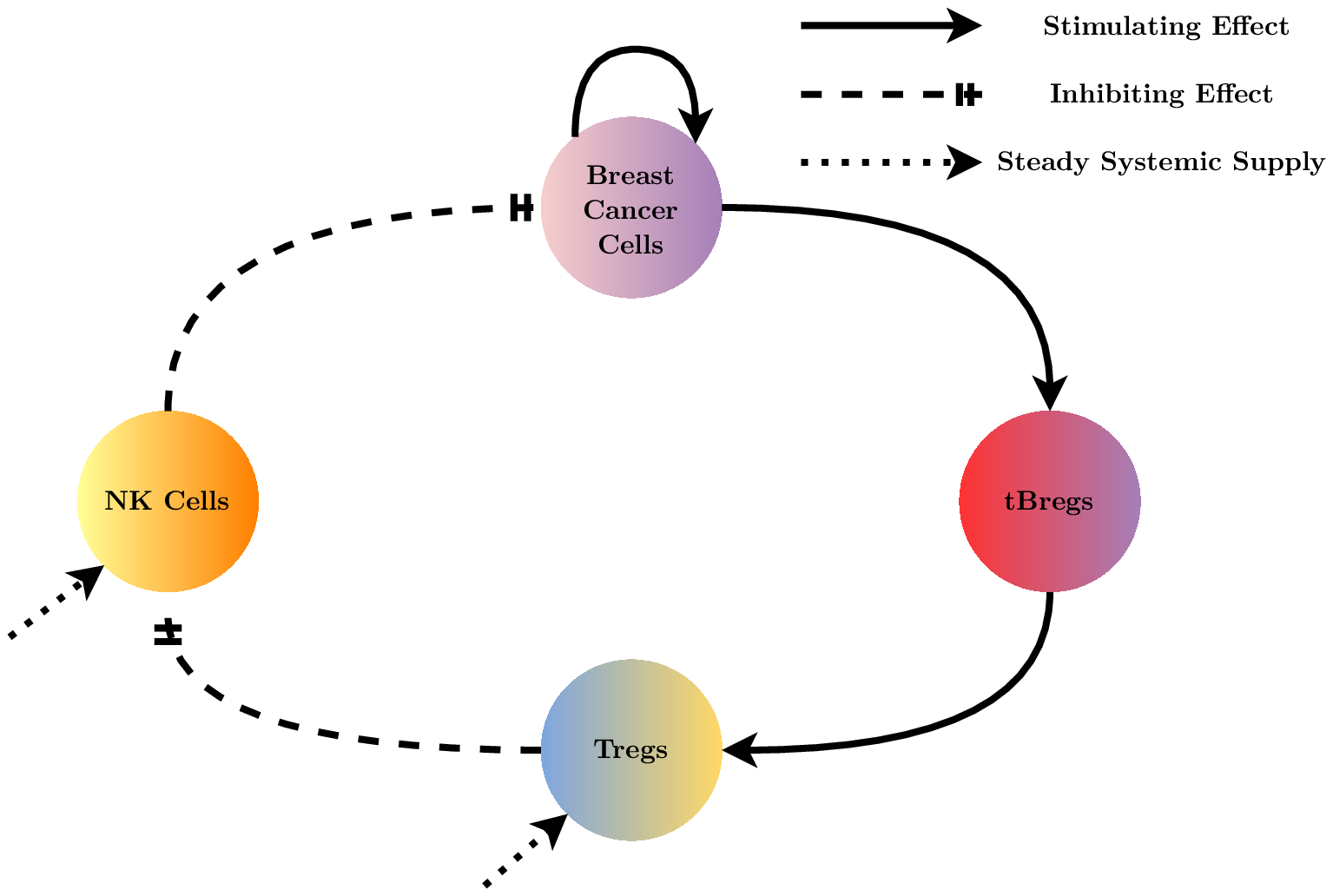}}%
\caption{Interactions between the cells in the simplified model. Solid line (---): Stimulating effect. Dashed line (- -): Inhibiting effect.}
\label{Fig2}
\end{figure}
Hence, denoting the populations of those cells as $T$, $N$, $R$ and $B$, respectively, and considering them as functions of time, $t$, measured in days, we deduce the following general system of coupled non-linear ordinary equations: 
\begin{align*}  
    & \dv{T}{t} = f_T{\left(T\right)} + f_{T,N}{\left(T,N\right)}\,,      \\ 
    & \dv{N}{t} = f_N{\left(N\right)} + f_{N,R}{\left(N,R\right)} \,, \\
    & \dv{R}{t} = f_R{\left(R\right)} + f_{R,B}{\left(R,B\right)} \,,     \\
    & \dv{B}{t} = f_B{\left(B\right)} + f_{T,B}{\left(T,B\right)} \,,            
\end{align*}
where $f_T$, $f_N$, $f_R$ and $f_B$ stand for the rates of the corresponding cell populations in the absence of interactions, and on the other hand, $f_{T,N}$, $f_{N,R}$, $f_{R,B}$ and $f_{T,B}$ stand for the additional rates in the presence of interactions. 

We then make the following assumptions: 
\begin{enumerate}
\item We utilize the non-interaction terms of \cite{bitsouni2021mathematical}, i.e. we consider logistic growth for the breast cancer cell population, while exponential decay of the rest populations. 
\item As far as the interaction terms are concerned: 
\begin{enumerate}[label=\roman*.]
\item The Treg-induced NK cell inhibition coefficient parameter is eliminated. Besides, to the  best of authors' knowledge, there is no data concerning its values in the bibliography. Moreover, the Hill function, which is used  to model the NK-induced breast cancer cell lysis, is replaced by the simpler linear function. Besides, such a choice is followed in various other models to capture the same dynamics (see \cite{bitsouni2021mathematical} and references therein). 
\item We drop the interaction between breast cancer cells and NK cells, i.e. the NK death mechanism by exhaustion of tumor-killing resources. Besides, in \cite{bitsouni2021mathematical} the value of the corresponding parameter is estimated to be small. Moreover, for the shake of simplicity, we consider linear function  for the modelling of interaction between NK calls and Tregs.
\item As for the interaction term between Tregs and  tBregs, in \cite{bitsouni2021mathematical} a linear function  was used, which involved an intermediate cell type - in particular, non-Treg CD4$^+$ T cells - that is omitted here.  We consider the same functional response without the presence of any intermediate cell population.
\item The condition for the interaction between breast cancer cells and tBregs in \cite{bitsouni2021mathematical} remains unchanged here.
\end{enumerate}
At the end of the day, the rate of every interaction between two cell populations is the product of the size of the two populations, i.e. we consider Holling's type I functional response for every interacting population. 
\end{enumerate} 

To sum up, we get the following system of ordinary differential equations:

\begin{subequations} \label{reducedModel}
\begin{align}  
    & \dv{T}{t} = aT(1-bT) - cNT\,,                    \label{dT}     \\ 
    & \dv{N}{t} = \sigma - \theta_NN - \gamma RN \,,    \label{dN}      \\
    & \dv{R}{t} = \kappa -\theta_RR + m_BBR \,,         \label{dR}      \\
    & \dv{B}{t} = -\theta_BB + m_TTB \,,               \label{dB} 
\end{align}
\end{subequations} 
for some positive constants $a$, $b$, $c$, $\sigma$, $\theta_N$, $\gamma$, $\kappa$, $\theta_R$, $m_B$, $\theta_B$ and $m_T$, along with the initial condition:
\begin{equation}
\left(T\left(0\right),N\left(0\right),R\left(0\right),B\left(0\right)\right)=\left(T_0,N_0,R_0,B_0\right)\in \mathbb{R}_{\ge 0}^4.
    \label{reducedModel_ICs}
\end{equation}

\hyperref[reducedModelParameterTable1]{Table \ref*{reducedModelParameterTable1}} lists all the parameters of our model, along with a brief description about each one, as well as their units. We subsequently give an explanation about each term of our model.

{\footnotesize
\begin{table}[h]
\caption{Description and units of the parameters of system \eqref{reducedModel}.}
\label{reducedModelParameterTable1}
\noindent\makebox[\textwidth]{
\begin{tabular}{p{1cm}p{6cm}p{2cm}}
\toprule
\textbf{Prm.} & \textbf{Description}   & \textbf{Unit}   \\ 
\midrule
$a$  & Tumour growth rate       & day$^{-1}$   \\
$b$      & Inverse of the tumour carrying capacity    & cell$^{-1}$  \\
$c$   & Rate of NK-induced tumour death  & cell$^{-1} \cdot$ day$^{-1}$\\ 
\midrule
$\sigma$           & Constant source of NK cells                    & cell $\cdot$ day$^{-1}$          \\
$\theta_N$         & Rate of programmable NK cell death             & day$^{-1}$                        \\
$\gamma$           & Rate of NK cell death due to Tregs             & cell$^{-1} \cdot$ day$^{-1}$ \\ 
\midrule
$\kappa$           & Constant source of Tregs                       & cell $\cdot$ day$^{-1}$         \\
$\theta_R$         & Rate of programmable Treg death                & day$^{-1}$           \\
$m_B$              & Rate of tBreg-induced Treg activation          & cell$^{-1} \cdot$ day$^{-1}$   \\
\midrule
$\theta_B$         & Rate of programmable tBreg cell death          & day$^{-1}$                       \\
$m_T$              & Rate of breast-cancer-induced tBreg activation & cell$^{-1} \cdot$ day$^{-1}$ \\
\bottomrule
\end{tabular}}
\end{table}
}

The first term of the right-hand side of equation \eqref{dT} models the logistic growth of breast cancer cells, while the second term models the death of breast cancer cells due to NK cells. We can show that (see \hyperref[BofT]{Proposition \ref*{BofT}})  the set $\left[0,1/b\right]$ is positively invariant for the component $T$ of the solution of  initial value problem  $\left\{\eqref{reducedModel},\eqref{reducedModel_ICs}\right\}$ (from now on: IVP). Hence, by analogy with the terminology used for the logistic equation, we say that the (positive) constant $1/b$ is the \textit{tumour carrying capacity} for IVP. It expresses a maximal reachable size due to competition between adjacent cells, e.g. for space or nutrients \cite{vaghi2020carrying}.  In general, the metastatic potential of a tumour increases as it reaches its carrying capacity. The conventional linear correlation between primary tumour size and the likelihood of metastasis (to the lymph nodes or to distant sites) is questioned, as a characteristic non-linear relationship, with a sigmoid corresponding curve, has already been described (see, e.g., \cite{sopik2018relationship} and the references therein).

As far as equation \eqref{dN} is concerned, parameter $\sigma$ models the steady source of NK cells. The second term models the natural death of NK cells, whereas the last term models the Treg-induced NK apoptosis.

Regarding equation \eqref{dR}, much like equation \eqref{dN}, parameter $\kappa$ models the steady source of Tregs. The second term models the natural death of Tregs, whereas the last term models the proliferation of Tregs due to tBregs.

Finally, the first term of equation \eqref{dB} models the natural death of tBregs, whereas the last term models their proliferation due to the existence of breast cancer.

\section{Equilibria and Stability Analysis} \label{StabAn}
We continue our analysis by finding the equilibrium points of system \eqref{reducedModel}, which we write as $E=\left(\overbar{T}, \overbar{N}, \overbar{R}, \overbar{B}\right)$. To do so, we equate the right-hand side of \eqref{reducedModel} to $\mathbf{0}$ and solve the resulting system of algebraic equations

\begin{subequations}
\begin{align}  
     a\overbar{T}(1-b\overbar{T}) - c\overbar{N}\overbar{T}             & = 0    \,,   \label{Tss}  \\ 
     \sigma - \theta_N\overbar{N} - \gamma \overbar{R}\overbar{N} & = 0 \,,  \label{Nss} \\
     \kappa -\theta_R\overbar{R} + m_B\overbar{B}\overbar{R}     & = 0\,,    \label{Rss}  \\
     -\theta_B\overbar{B} + m_T\overbar{T}\overbar{B}           &  = 0 \,.\label{Bss}
\end{align}
\end{subequations} 

\begin{secproposition}
System \eqref{reducedModel} has three equilibrium points

\begin{align*}
E_1 & =   \left(0, \frac{a}{c_1}, \frac{\kappa }{\theta _R}, 0\right) \, , \\
E_2 & = \left(\frac{c_1-c}{bc_1}, \frac{a}{c_1}, \frac{\kappa }{\theta _R}, 0\right) \text{\quad and} \\
E_3 & =
\begin{multlined}[t]
\biggl(\frac{\theta _B}{m_T}, -\frac{a   \Theta_1}{c m_T}, -\frac{ \Theta_2}{a \gamma   \Theta_1},  \frac{\sigma m_T\theta_R\left(c-c_2\right)}{m_B\Theta_2} \bigg) \, ,
\end{multlined}
\end{align*}
where 
\begin{align*}
   \Theta_1 & \coloneqq  b \theta_B - m_T  \,, \\
   \Theta_2 &  \coloneqq  a \theta_N \Theta_1 + c \sigma  m_T\,, \\
    c_1 & \coloneqq  \frac{a\left(\theta_N\theta_R+\gamma\kappa\right)}{\theta_R \sigma} \,, \\
    c_2 & \coloneqq -\frac{\Theta_1 c_1}{m_T}\,,
\end{align*}
for which we can easily see that $c_1 > c_2=\left(1-\frac{b\theta_B}{m_T}\right)c_1$. 
\end{secproposition}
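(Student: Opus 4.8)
The plan is to solve the algebraic system \eqref{Tss}--\eqref{Bss} by exploiting its near-triangular, factored structure rather than treating it as a generic quartic system. The key observation is that \eqref{Bss} factors as $\overbar{B}\left(m_T\overbar{T}-\theta_B\right)=0$, forcing a clean dichotomy: either $\overbar{B}=0$ or $\overbar{T}=\theta_B/m_T$. I would organise the entire argument around these two cases, and since this dichotomy is exhaustive the three resulting points are guaranteed to be all the equilibria. A preliminary remark streamlines everything: because all parameters are positive, \eqref{Nss} rules out $\overbar{N}=0$ (it would force $\sigma=0$) and \eqref{Rss} rules out $\overbar{R}=0$ (it would force $\kappa=0$), so throughout I may freely divide by $\overbar{N}$ and $\overbar{R}$.

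In the first case $\overbar{B}=0$, equation \eqref{Rss} immediately gives $\overbar{R}=\kappa/\theta_R$, and substituting this into \eqref{Nss} and solving the resulting linear equation for $\overbar{N}$ yields $\overbar{N}=\sigma\theta_R/\left(\theta_N\theta_R+\gamma\kappa\right)$, which is exactly $a/c_1$ by the definition of $c_1$. Finally \eqref{Tss} factors as $\overbar{T}\left(a(1-b\overbar{T})-c\overbar{N}\right)=0$, giving the two branches $\overbar{T}=0$ and $\overbar{T}=(a-c\overbar{N})/(ab)$; substituting $\overbar{N}=a/c_1$ into the latter and simplifying produces $(c_1-c)/(bc_1)$. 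These two branches are precisely $E_1$ and $E_2$.

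In the second case $\overbar{T}=\theta_B/m_T$, I would solve the remaining three equations in cascade. Dividing \eqref{Tss} by $\overbar{T}\neq0$ gives $\overbar{N}=a(1-b\overbar{T})/c$, which with $\overbar{T}=\theta_B/m_T$ and $\Theta_1=b\theta_B-m_T$ collapses to $-a\Theta_1/(cm_T)$; feeding this into \eqref{Nss} and recognising $\Theta_2=a\theta_N\Theta_1+c\sigma m_T$ in the numerator produces $\overbar{R}=-\Theta_2/(a\gamma\Theta_1)$; then \eqref{Rss} gives $\overbar{B}=(\theta_R\overbar{R}-\kappa)/(m_B\overbar{R})$. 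The one step that is not purely mechanical, and where I expect the main bookkeeping effort, is showing that this last expression equals the stated closed form $\sigma m_T\theta_R(c-c_2)/(m_B\Theta_2)$. The cleanest route is to clear denominators and reduce the claim to the polynomial identity $\theta_R\Theta_2+\kappa a\gamma\Theta_1=\sigma m_T\theta_R(c-c_2)$; on the left one factors $a\Theta_1$ out of the $\theta_N\theta_R$- and $\gamma\kappa$-terms and uses $a(\theta_N\theta_R+\gamma\kappa)=c_1\theta_R\sigma$ (the definition of $c_1$), while on the right one substitutes $c_2=-\Theta_1c_1/m_T$; both sides then reduce to $\theta_R\sigma(cm_T+\Theta_1c_1)$, closing the case and delivering $E_3$.

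It remains to verify the concluding inequality. From $c_2=-\Theta_1c_1/m_T=-(b\theta_B-m_T)c_1/m_T$ one reads off directly $c_2=(1-b\theta_B/m_T)c_1$, and since $b,\theta_B,m_T,c_1$ are all positive the factor $1-b\theta_B/m_T$ is strictly less than $1$, whence $c_2<c_1$.
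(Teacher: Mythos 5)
Your proposal is correct and follows essentially the same route as the paper: exploit the factored structure of the equilibrium equations, split into cases, and back-substitute. The only cosmetic difference is that you organise the dichotomy around equation \eqref{Bss} ($\overbar{B}=0$ versus $\overbar{T}=\theta_B/m_T$) whereas the paper branches first on \eqref{Tss}; your explicit verification of the closed form for $\overbar{B}$ and of $c_2=\left(1-\frac{b\theta_B}{m_T}\right)c_1<c_1$ is sound and slightly more detailed than the paper's.
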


\begin{proof}
 From equation \eqref{Tss}, we see that either $\overbar{T} = 0$ or $ a(1-b\overbar{T}) - c\overbar{N} = 0 $. In the case where $\overbar{T} = 0$, by replacing $\overbar{T} = 0$ to equation \eqref{Bss}, we get that $\overbar{B} = 0$, which we consequently replace to equation \eqref{Rss} to get $\overbar{R} = \kappa/\theta_R$. Replacing the derived value of $\overbar{R}$ to equation \eqref{Nss}, we get that $\overbar{N} = (\sigma  \theta _R)/(\gamma  \kappa +\theta _N \theta _R)$, which yields the equilibrium point $E_1$.
 
 We next assume that $\overbar{T} \neq 0$, thus $ a(1-b\overbar{T}) - c\overbar{N} = 0 $. From equation \eqref{Bss}, we see that either $ \overbar{B} = 0 $ or $ \overbar{T} = \theta_B / m_T $. In the case where $\overbar{B} = 0$, we follow the same procedure as in the above paragraph to find the values of $\overbar{N}$ and $\overbar{R}$, which are equal to their respective counterpart from equilibrium $E_1$. Then, by replacing $\overbar{N}$ to equation $ a(1-b\overbar{T}) - c\overbar{N} = 0 $, we get $\overbar{T} = (a \gamma  \kappa +a \theta _N \theta _R-c \sigma  \theta _R)/(a b \gamma  \kappa +a b \theta _N \theta _R)$ which gives us the second equilibrium point, $E_2$.
 
 Finally, we have the case where $\overbar{T} \neq 0$ and $\overbar{B} \neq 0$. From equation \eqref{Bss}, we see that $\overbar{T} = \theta_B / m_T$. By replacing it to equation $ a(1-b\overbar{T}) - c\overbar{N} = 0 $, we find $ \overbar{N} = \left(a \left(m_T-b \theta _B\right)\right)/\left(c m_T\right) $. Subsequently, replacing $\overbar{N}$ to \eqref{Nss}, we get that $\overbar{R} = (a b \theta _B \theta _N-a \theta _N m_T+c \sigma  m_T)/(a \gamma  \left(m_T-b \theta _B\right))$ and by replacing $\overbar{R}$ to equation \eqref{Rss}, we find $\overbar{B} = (a \left(m_T-b \theta _B\right) \left(\gamma  \kappa +\theta _N \theta _R\right)-c \sigma  m_T \theta _R)/(m_B \left(a \theta _N \left(m_T-b \theta _B\right)-c \sigma  m_T\right))$, which gives us the third equilibrium point, $E_3$.
\end{proof}

Since we are only interested in the cases were our variables are non-negative, we determine the conditions under which the equilibrium points' coordinates are non-negative.

\begin{secproposition}
Admissible (i.e. with non-negative components) equilibrium point $E_1$ exists for any set of parameters.
\end{secproposition}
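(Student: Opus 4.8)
The plan is a direct positivity check on the four coordinates of $E_1 = \left(0, \frac{a}{c_1}, \frac{\kappa}{\theta_R}, 0\right)$, exploiting that every model parameter is strictly positive by hypothesis. First I would dispose of the first and fourth coordinates, which are identically $0$ and hence non-negative irrespective of the parameters; this already handles half of the claim with no computation whatsoever.

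Next I would treat the third coordinate $\frac{\kappa}{\theta_R}$: since $\kappa > 0$ and $\theta_R > 0$ are among the positive constants, the quotient is well-defined and strictly positive. For the second coordinate $\frac{a}{c_1}$ the only point requiring care is to confirm that the denominator $c_1$ is strictly positive, so that the expression is both well-defined and of the correct sign. Writing $c_1 = \frac{a\left(\theta_N\theta_R + \gamma\kappa\right)}{\theta_R\sigma}$, I would observe that its numerator is a sum of products of strictly positive constants while its denominator $\theta_R\sigma$ is a product of strictly positive constants, whence $c_1 > 0$; equivalently, one may simplify $\frac{a}{c_1} = \frac{\sigma\theta_R}{\theta_N\theta_R + \gamma\kappa}$, which is manifestly positive.

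Having shown that two coordinates vanish and the other two are strictly positive, I would conclude that all four are non-negative, and — crucially — that this holds with no constraints linking the parameters, so $E_1$ is admissible for \emph{every} admissible parameter choice, exactly as claimed. The main (and essentially only) subtlety is the well-definedness of $\frac{a}{c_1}$, i.e. ensuring $c_1 \neq 0$; this is immediate because $\theta_N\theta_R + \gamma\kappa$ is a strictly positive sum, so there is in fact no genuine obstacle and no case distinction is needed.
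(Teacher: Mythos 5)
Your proposal is correct and follows the same route as the paper, which simply observes that all coordinates of $E_1$ are non-negative because every parameter is positive; your version merely spells out the coordinate-by-coordinate check, including the (easy) well-definedness of $a/c_1$. No gaps.
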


\begin{proof}
 It is evident that the coordinates of $E_1$ are non-negative since all the parameters are positive.
\end{proof}

\begin{secproposition} \label{existE2}
Admissible equilibrium point $E_2$ exists when $c\leq c_1$. If $ c=c_1$, then $E_1 \equiv E_2$ and if $  c<c_1$, then $E_1 \nequiv E_2$.
\end{secproposition}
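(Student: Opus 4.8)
The plan is to reduce admissibility of $E_2$ to a single sign condition, since three of its four coordinates are manifestly non-negative. From the previous proposition, the components of $E_2$ are $\overbar{T} = (c_1 - c)/(bc_1)$, $\overbar{N} = a/c_1$, $\overbar{R} = \kappa/\theta_R$ and $\overbar{B} = 0$. Because every model parameter is positive, and hence $c_1 = a(\theta_N\theta_R + \gamma\kappa)/(\theta_R\sigma) > 0$, the coordinates $\overbar{N}$, $\overbar{R}$ and $\overbar{B}$ are automatically non-negative. Thus the entire question of admissibility rests on the first coordinate $\overbar{T}$.

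The next step is to read off the sign of $\overbar{T}$. Since $b > 0$ and $c_1 > 0$, the denominator $bc_1$ is positive, so $\overbar{T}$ shares the sign of its numerator $c_1 - c$. Consequently $\overbar{T} \ge 0$ holds precisely when $c \le c_1$, which establishes the existence claim.

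For the comparison with $E_1$, I would evaluate $\overbar{T}$ at and below the threshold. At $c = c_1$ the numerator vanishes, giving $\overbar{T} = 0$; since the remaining three coordinates of $E_2$ already coincide with those of $E_1 = (0, a/c_1, \kappa/\theta_R, 0)$, the two equilibria merge, so $E_1 \equiv E_2$. When $c < c_1$ instead, $\overbar{T} = (c_1-c)/(bc_1) > 0$ is strictly positive, which disagrees with the vanishing first coordinate of $E_1$, and therefore $E_1 \nequiv E_2$.

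This argument is entirely direct verification, so there is no genuine obstacle to flag; the only point requiring care is confirming $c_1 > 0$ from the positivity of $a$, $\theta_N$, $\theta_R$, $\gamma$, $\kappa$ and $\sigma$, after which both the existence dichotomy and the coincidence with $E_1$ follow from inspecting the sign of $c_1 - c$.
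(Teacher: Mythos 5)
Your argument is correct and follows essentially the same route as the paper: admissibility reduces to the sign of the first coordinate's numerator $c_1 - c$, with the case $c = c_1$ forcing $\overbar{T} = 0$ and hence $E_1 \equiv E_2$, and $c < c_1$ giving $\overbar{T} > 0$ and hence $E_1 \nequiv E_2$. Your version merely makes explicit the (trivial) non-negativity of the remaining coordinates and of $c_1$, which the paper leaves implicit.
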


\begin{proof}
 In order for $E_2$ to have non-negative coordinates, the numerator of its first coordinate needs to be non-negative, thus $ c_1-c\geq 0$. However, if $ c_1-c = 0$, then the first coordinate of $E_2$ becomes zero, therefore $E_1 \equiv E_2$. Hence, $E_2$ exists and does not coincide with $E_1$ when $ c<c_1$.
\end{proof}

\begin{secproposition} \label{existE3}
Admissible equilibrium point $E_3$ exists when $c\geq c_2$ and $\frac{\theta_B}{m_T} < \frac{1}{b}$. If $c=c_2$, then $E_2 \equiv E_3$ and if $c>c_2$, then $E_2 \nequiv E_3$. 
\end{secproposition}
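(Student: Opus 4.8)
The plan is to verify directly that each of the four coordinates of $E_3$ is non-negative under the two hypotheses, and then to compare $E_3$ with $E_2$ coordinate-by-coordinate at the threshold $c = c_2$. First I would observe that the hypothesis $\theta_B/m_T < 1/b$ is exactly the statement $\Theta_1 = b\theta_B - m_T < 0$, which is the sign I will exploit throughout. The first coordinate $\overbar{T} = \theta_B/m_T$ is automatically positive, and the second coordinate $\overbar{N} = -a\Theta_1/(cm_T)$ is positive precisely because $\Theta_1 < 0$. The remaining two coordinates, $\overbar{R} = -\Theta_2/(a\gamma\Theta_1)$ and $\overbar{B} = \sigma m_T\theta_R(c - c_2)/(m_B\Theta_2)$, both hinge on the sign of $\Theta_2$: since $a\gamma\Theta_1 < 0$, positivity of $\overbar{R}$ is equivalent to $\Theta_2 > 0$, and with $c \geq c_2$ already making the numerator of $\overbar{B}$ non-negative, positivity of $\overbar{B}$ likewise requires $\Theta_2 > 0$.

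The crux of the argument, and the step I expect to be the main obstacle, is therefore to show that the two stated hypotheses force $\Theta_2 > 0$, even though this is not listed as an explicit assumption. Writing $\Theta_2 = a\theta_N\Theta_1 + c\sigma m_T$ and solving $\Theta_2 > 0$ for $c$ gives the threshold $c > -a\theta_N\Theta_1/(\sigma m_T)$. I would then compare this threshold with $c_2 = -\Theta_1 c_1/m_T$; a short computation, substituting $c_1 = a(\theta_N\theta_R + \gamma\kappa)/(\theta_R\sigma)$, yields the difference $c_2 - \bigl(-a\theta_N\Theta_1/(\sigma m_T)\bigr) = -a\Theta_1\gamma\kappa/(m_T\theta_R\sigma)$, which is strictly positive because $\Theta_1 < 0$. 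Hence $c_2$ strictly exceeds the sign-change threshold of $\Theta_2$, so the hypothesis $c \geq c_2$ already guarantees $\Theta_2 > 0$, and with it the non-negativity of both $\overbar{R}$ and $\overbar{B}$.

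For the coincidence claims, I would substitute $c = c_2$ into each coordinate of $E_3$ and check equality against $E_2 = \bigl((c_1-c)/(bc_1),\, a/c_1,\, \kappa/\theta_R,\, 0\bigr)$. Using the identity $c_2 = (1 - b\theta_B/m_T)c_1$ recorded just before the proposition, the first coordinate of $E_2$ collapses to $(c_1 - c_2)/(bc_1) = \theta_B/m_T$, matching $\overbar{T}$ of $E_3$; substituting $c = c_2$ into $\overbar{N}$ cancels the $\Theta_1$ factors to give $a/c_1$; and evaluating $\Theta_2$ at $c = c_2$ reduces it to $-a\Theta_1\gamma\kappa/\theta_R$, whereupon $\overbar{R}$ simplifies to $\kappa/\theta_R$. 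The fourth coordinate vanishes since its numerator carries the factor $c - c_2$. This establishes $E_2 \equiv E_3$ when $c = c_2$. Finally, for $c > c_2$ the numerator of $\overbar{B}$ is strictly positive while $\Theta_2 > 0$, so $\overbar{B} > 0 = \overbar{B}_{E_2}$, which shows the two equilibria are distinct and completes the argument.
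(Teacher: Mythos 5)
Your proof is correct and follows essentially the same route as the paper's: a coordinate-by-coordinate sign analysis in which the hypothesis $\theta_B/m_T<1/b$ gives $\Theta_1<0$, and the key step is showing that $c\geq c_2$ already forces $\Theta_2>0$. The only differences are cosmetic: you compare the two thresholds as lower bounds on $c$ (reducing to $-a\gamma\kappa\Theta_1/(m_T\theta_R\sigma)>0$) whereas the paper compares them as lower bounds on $\theta_B/m_T$ (reducing to $\gamma\kappa>0$), and you verify the coincidence $E_2\equiv E_3$ at $c=c_2$ explicitly coordinate by coordinate, which the paper merely asserts.
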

 
\begin{proof}
 Clearly, the first coordinate of $E_3$ is positive.
 For its second coordinate to be non-negative, $\Theta_1\leq  0$ must hold. However, when  $\Theta_1 =  0$, the denominator of the third coordinate of $E_3$ becomes zero. Therefore, assuming that $\Theta_1\neq  0$, we have that $\Theta_1 <  0$ or equivalently $ \frac{\theta_B}{m_T} < \frac{1}{b}$.
 
 Since $\Theta_1 <  0$, then the numerator of the third coordinate of $E_3$ needs to be non-negative, thus $\Theta_2 \geq 0$. However, if $\Theta_2= 0$, then the denominator of the fourth coordinate of $E_3$ becomes zero. Therefore, assuming that $\Theta_2 \neq 0$, we have that $\Theta_2 > 0$ or equivalently $\frac{a \theta_N - c \sigma}{a b \theta_N} < \frac{\theta_B}{m_T}$.
 
 Since we assumed that $\Theta_2>0$, for its  forth coordinate to be non-negative, $c\geq c_2$ must hold or equivalently $\frac{a \gamma  \kappa +a \theta _N \theta _R-c \sigma  \theta _R}{a b \gamma  \kappa +a b \theta _N \theta _R} \le \frac{\theta_B}{m_T}$.
 
 We need to determine which of the two lower bounds of $\frac{\theta_B}{m_T}$ we found is the greatest, so we can take that as our lower bound for $\frac{\theta_B}{m_T}$. Let $\frac{a \gamma  \kappa +a \theta _N \theta _R-c \sigma  \theta _R}{a b \gamma  \kappa +a b \theta _N \theta _R} < \frac{a \theta_N - c \sigma}{a b \theta_N}$. It follows that
 
\begin{align*}
 \frac{a \gamma  \kappa +a \theta _N \theta _R-c \sigma  \theta _R}{\gamma  \kappa + \theta _N \theta _R} < \frac{a \theta_N - c \sigma}{ \theta_N} 
    &\Leftrightarrow   a-\frac{ c \sigma  \theta _R}{\gamma  \kappa + \theta _N \theta _R}<a-\frac{ c \sigma}{ \theta_N}   \Leftrightarrow \gamma  \kappa < 0
\end{align*}
 which is impossible, therefore we have that $\frac{a \gamma  \kappa +a \theta _N \theta _R-c \sigma  \theta _R}{a b \gamma  \kappa +a b \theta _N \theta _R} \ge \frac{a \theta_N - c \sigma}{a b \theta_N}$ and we choose $\frac{a \gamma  \kappa +a \theta _N \theta _R-c \sigma  \theta _R}{a b \gamma  \kappa +a b \theta _N \theta _R}$ as our lower bound for $\frac{\theta_B}{m_T}$.
 
However, when $\frac{a \gamma  \kappa +a \theta _N \theta _R-c \sigma  \theta _R}{a b \gamma  \kappa +a b \theta _N \theta _R} = \frac{\theta_B}{m_T}$, we have that $E_2 \equiv E_3$. Therefore, when $E_2$ does not coincide with $E_3$, then $\frac{a \gamma  \kappa +a \theta _N \theta _R-c \sigma  \theta _R}{a b \gamma  \kappa +a b \theta _N \theta _R} < \frac{\theta_B}{m_T}$, and the proposition is proved.
\end{proof}

We next study the local stability of system \eqref{reducedModel} using the linearisation theorem, as well as the center manifold theorem in the cases where the equilibrium point is non-hyperbolic, while also utilising Descartes' rule of signs in the case where the Jacobian matrix's eigenvalues are too complex to be computed.

We begin by computing the Jacobian matrix of system \eqref{reducedModel} to be equal to
\begin{equation*}
\mathbf{J}(T,N,R,T) =
\begin{bmatrix} 
 a-2 a b T-c N  & -c T & 0 & 0 \\
 0 &  -\theta_N-\gamma R  & - \gamma N   & 0 \\
 0 & 0 & -\theta _R+m_B B  &  m_B R \\
 m_T B  & 0 & 0 & -\theta_B+m_T T  \\
\end{bmatrix} \, .
\end{equation*}

\begin{secproposition}
Equilibrium point $E_1$, when it does not coincide with $ E_2$, is locally asymptotically stable (stable node) if $c>c_1 $ and unstable (saddle) if $c<c_1$.
\end{secproposition}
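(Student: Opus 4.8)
The plan is to apply the linearisation (Hartman--Grobman) theorem, which reduces the question to computing the eigenvalues of the Jacobian $\mathbf{J}$ evaluated at $E_1 = \left(0, \tfrac{a}{c_1}, \tfrac{\kappa}{\theta_R}, 0\right)$. First I would substitute $\overbar{T}=0$, $\overbar{N}=a/c_1$, $\overbar{R}=\kappa/\theta_R$ and $\overbar{B}=0$ into $\mathbf{J}$. The two features that make this tractable are that the $B$-dependent entries $m_T\overbar{B}$ (position $(4,1)$) and $-\theta_R + m_B\overbar{B}$ (position $(3,3)$) simplify because $\overbar{B}=0$, and that the $(1,1)$ entry collapses to $a - c\overbar{N} = a - ca/c_1 = a(c_1-c)/c_1$ once $\overbar{T}=0$ annihilates the $-2abT$ term.

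The crucial structural observation I would make next is that the resulting matrix is upper triangular: the only symbolically nonzero subdiagonal entry, $\mathbf{J}_{41}=m_T\overbar{B}$, vanishes, while the surviving off-diagonal couplings $-\gamma\overbar{N}$ (entry $(2,3)$) and $m_B\overbar{R}$ (entry $(3,4)$) both sit strictly above the diagonal. Consequently the eigenvalues are exactly the diagonal entries,
\[
\lambda_1 = \frac{a(c_1-c)}{c_1}, \quad \lambda_2 = -\frac{\theta_N\theta_R+\gamma\kappa}{\theta_R}, \quad \lambda_3 = -\theta_R, \quad \lambda_4 = -\theta_B,
\]
and in particular all four are real. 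Since every parameter is positive, $\lambda_2,\lambda_3,\lambda_4 < 0$ unconditionally, so the entire stability verdict rests on the sign of $\lambda_1$, equivalently on the sign of $c_1-c$.

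To finish I would argue as follows. By Proposition \ref{existE2}, $E_1 \equiv E_2$ precisely when $c=c_1$; hence the hypothesis that $E_1$ does not coincide with $E_2$ guarantees $c\neq c_1$, so $\lambda_1\neq 0$ and $E_1$ is hyperbolic, legitimising the linearisation theorem. If $c>c_1$ then $\lambda_1<0$, all four eigenvalues are real and negative, and $E_1$ is a locally asymptotically stable node; if $c<c_1$ then $\lambda_1>0$ while the other three remain negative, giving one unstable and three stable directions, i.e. a saddle.

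There is no serious obstacle here: the computation is routine and the only step requiring care is the bookkeeping that establishes the triangular form, in particular confirming that $\overbar{B}=0$ kills the $(4,1)$ coupling and thereby decouples the $T$-direction from the rest. The one conceptual point worth flagging explicitly is the verification of hyperbolicity via $c\neq c_1$, which is exactly what the non-coincidence assumption supplies and what licenses the sharp classification into node versus saddle rather than a mere comment on stability.
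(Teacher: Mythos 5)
Your proof is correct and follows essentially the same route as the paper: evaluate the Jacobian at $E_1$, observe that $\overbar{B}=0$ makes it upper triangular, read off the eigenvalues (your $\lambda_2=-\tfrac{\theta_N\theta_R+\gamma\kappa}{\theta_R}$ is exactly the paper's $-\tfrac{\sigma c_1}{a}$), and classify by the sign of $a(c_1-c)/c_1$. The only difference is that you make the hyperbolicity check and the appeal to the linearisation theorem explicit, which the paper leaves implicit.
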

\begin{proof}
 At the equilibrium point $E_1$, the Jacobian matrix becomes
\begin{equation} \label{jacE1}
    \mathbf{J}(E_1;c) = 
    \begin{bmatrix}
 \frac{a\left(c_1-c\right)}{c_1} & 0 & 0 & 0 \\
 0 & -\frac{\sigma c_1}{a} & -\frac{a\gamma}{c_1} & 0 \\
 0 & 0 & -\theta _R & \frac{\kappa  m_B}{\theta _R} \\
 0 & 0 & 0 & -\theta _B \\
    \end{bmatrix} \,,
\end{equation}
with corresponding eigenvalues
\begin{equation}\label{eigenvalE1}
   \lambda_{11}\left(c\right) = \frac{a\left(c_1-c\right)}{c_1},\;  \lambda_{12} = -\theta _B\,, \; \lambda_{13} = -\theta _R\quad \text{ and }\quad \lambda_{14} =  -\frac{\sigma c_1}{a}\,.
\end{equation}

 It is clear that the  eigenvalues $\lambda_{12}, \lambda_{13}$ and $\lambda_{14}$ are always negative since all the parameters are positive, while $\lambda_{11}\left(c\right)$ is negative when  $c>c_1 $ and positive when $c<c_1 $.
\end{proof}

\begin{secremark}
Note that the corresponding eigenvectors of matrix \eqref{jacE1} are
 \begin{align} 
      \mathbf{u}_{11}  &= \begin{bmatrix}
     1,0,0,0
     \end{bmatrix}\tran, \quad\mathbf{u}_{12}  = \begin{bmatrix}
      0,-\frac{a^2\gamma  \kappa  m_B}{ c_1\theta _R\left(\theta _B-\theta _R\right)\left(a\theta _B-\sigma c_1\right)},-\frac{\kappa  m_B}{\theta _R \left(\theta _B-\theta _R\right)},1 
     \end{bmatrix}\tran, \label{eigenvecE1.a}\\
     \mathbf{u}_{13} & = \begin{bmatrix}
      0,\frac{a^2\gamma }{c_1\left(a\theta _R-\sigma c_1\right)},1,0
     \end{bmatrix}\tran \quad \text{ and } \quad
     \mathbf{u}_{14} = \begin{bmatrix}
     0,1,0,0
     \end{bmatrix}
    \tran.\label{eigenvecE1.b}
 \end{align}
In the case where $\lambda_{11}$ is positive, we have that the local stable invariant manifold is tangent to the stable manifold of the linearised system, which is $E^s = \text{span}\{\mathbf{u}_{12},\mathbf{u}_{13},\mathbf{u}_{14}\}$, while the local unstable invariant manifold is tangent to the unstable manifold of the linearised system, which is $E^u = \text{span}\{\mathbf{u}_{11}\}$, that is the $T$ axis. Biologically, this means that a small initial number of tumour cells will increase even with the presence of normal levels of NK cells, Tregs and tBregs.
\end{secremark}

\begin{secproposition}
Admissible equilibrium point $E_2$, when it  does not coincide with $E_3$, is locally asymptotically stable (stable node) if  $c>c_2$ and unstable (saddle) if  $c<c_2$.
\end{secproposition}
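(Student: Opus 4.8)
The plan is to mirror the analysis already carried out for $E_1$: evaluate the Jacobian $\mathbf{J}$ at $E_2$, read off its spectrum, and track the signs of the eigenvalues as $c$ crosses the threshold $c_2$. The key structural observation is that the fourth coordinate of $E_2$ vanishes, $\overbar{B}=0$, so the only subdiagonal entry of $\mathbf{J}$ that could be nonzero, namely $(4,1)=m_T\overbar{B}$, also vanishes. Since the remaining below-diagonal entries of $\mathbf{J}$ are structurally zero, $\mathbf{J}(E_2)$ is upper triangular, and hence its eigenvalues are exactly its diagonal entries; no characteristic polynomial need be expanded and Descartes' rule of signs is unnecessary here.

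First I would substitute $\overbar{T}=(c_1-c)/(bc_1)$, $\overbar{N}=a/c_1$, $\overbar{R}=\kappa/\theta_R$ and $\overbar{B}=0$ into the four diagonal entries. The $(2,2)$ and $(3,3)$ entries reduce to $-\theta_N-\gamma\kappa/\theta_R$ and $-\theta_R$; using the definition $c_1=a(\theta_N\theta_R+\gamma\kappa)/(\theta_R\sigma)$, the former rewrites as $-c_1\sigma/a$, so both are manifestly negative for every admissible parameter set. The real algebraic work sits in the two remaining entries. I expect $(1,1)=a-2ab\overbar{T}-c\overbar{N}$ to collapse to $-a(c_1-c)/c_1$, and $(4,4)=-\theta_B+m_T\overbar{T}$ to collapse to $-m_T(c-c_2)/(bc_1)$ once the identity $c_2=\left(1-b\theta_B/m_T\right)c_1$ from the first proposition is invoked (equivalently $\theta_B=m_T(c_1-c_2)/(bc_1)$). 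These simplifications are routine but must be done carefully, since they carry the entire sign analysis and, in particular, force the bifurcation threshold to be $c_2$ exactly.

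With the eigenvalues in hand as
\[
\lambda_{21}=-\frac{a(c_1-c)}{c_1},\quad \lambda_{22}=-\frac{c_1\sigma}{a},\quad \lambda_{23}=-\theta_R,\quad \lambda_{24}=-\frac{m_T(c-c_2)}{bc_1},
\]
I would close by sign inspection. Admissibility of $E_2$ distinct from $E_1$ forces $c<c_1$ by Proposition \ref{existE2}, so $\lambda_{21}<0$, while $\lambda_{22}$ and $\lambda_{23}$ are negative unconditionally. Thus the stability is governed entirely by $\lambda_{24}$: it is negative precisely when $c>c_2$, yielding four negative eigenvalues and hence a locally asymptotically stable node, and positive when $c<c_2$, yielding a single positive eigenvalue and hence a saddle. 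The excluded boundary $c=c_2$ is exactly the degenerate case $E_2\equiv E_3$ of Proposition \ref{existE3}, where $\lambda_{24}=0$ and the point is non-hyperbolic, consistent with the hypothesis that $E_2$ does not coincide with $E_3$. The main obstacle is therefore not conceptual but bookkeeping: correctly reducing $(4,4)$ to the form $-m_T(c-c_2)/(bc_1)$ so that the sign change of $\lambda_{24}$ lines up with the threshold $c_2$.
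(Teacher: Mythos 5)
Your proposal is correct and follows essentially the same route as the paper: evaluate the Jacobian at $E_2$, note that $\overbar{B}=0$ makes it upper triangular, read the eigenvalues off the diagonal as $-a(c_1-c)/c_1$, $-\sigma c_1/a$, $-\theta_R$ and $m_T(c_2-c)/(bc_1)$, and conclude by sign inspection using admissibility ($c<c_1$) for the first and the threshold $c_2$ for the last. The only differences are cosmetic (your eigenvalue labels are permuted relative to the paper's).
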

\begin{proof}
At the equilibrium point $E_2$, the Jacobian matrix becomes
 \begin{equation*}
    \mathbf{J}(E_2;c) = 
\begin{bmatrix}
 -\frac{a\left(c_1-c\right)}{c_1}  &  -\frac{c\left(c_1-c\right)}{bc_1}  & 0 & 0 \\
 0 & -\frac{\sigma c_1}{a} & -\frac{a\gamma}{c_1} & 0 \\
 0 & 0 & -\theta _R & \frac{\kappa  m_B}{\theta _R} \\
 0 & 0 & 0 & \frac{m_T\left(c_2-c\right)}{bc_1} \\
\end{bmatrix} \, ,
\end{equation*}
with corresponding eigenvalues
\begin{equation} 
     \lambda_{21}\left(c\right)=\frac{m_T\left(c_2-c\right)}{bc_1}\,, \; \lambda_{22}=-\theta _R\,, \; \lambda_{23} =-\frac{\sigma c_1}{a} \, \quad \text{ and }  \lambda_{24}\left(c\right)=-\frac{a\left(c_1-c\right)}{c_1}   \,.
     \label{eigenvalE2}
 \end{equation}
 
 It is clear that the  eigenvalues $\lambda_{22}, \lambda_{23}$  are always negative since all the parameters of our model are positive. Moreover, $\lambda_{24}\left(c\right)$ is also negative, since its denominator is positive and its numerator is positive when $E_2$ exists from \hyperref[existE2]{Proposition \ref*{existE2}}. Finally,  $\lambda_{21}\left(c\right)$ is negative when $c>c_2$ and positive when $c<c_2$. 
 
\end{proof}

\begin{secproposition}
Admissible equilibrium point $E_3$, when it  does not coincide with $E_2$, is always unstable.
\end{secproposition}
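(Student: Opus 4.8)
The plan is to evaluate the Jacobian matrix at $E_3$ and to exploit its cyclic sparsity structure to read off the characteristic polynomial, rather than attempting to compute all four eigenvalues explicitly. First I would substitute the coordinates of $E_3$ into $\mathbf{J}$, simplifying the diagonal entries by means of the equilibrium relations \eqref{Tss}--\eqref{Bss}. The decisive simplifications are that $\overbar{T} = \theta_B/m_T$ makes the $(4,4)$ entry $-\theta_B + m_T\overbar{T}$ vanish, and that \eqref{Tss} turns the $(1,1)$ entry $a - 2ab\overbar{T} - c\overbar{N}$ into $-ab\overbar{T}$. What remains is a matrix whose only nonzero off-diagonal entries sit at positions $(1,2)$, $(2,3)$, $(3,4)$ and $(4,1)$.

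Next I would observe that these four off-diagonal entries form a single directed $4$-cycle, so that in the Leibniz expansion of $\det\left(\mathbf{J}(E_3) - \lambda \mathbf{I}\right)$ only two permutations contribute: the identity and the full cycle (of sign $-1$). This yields the compact characteristic polynomial
\begin{equation*}
p(\lambda) = \prod_{i=1}^{4}\left(d_i - \lambda\right) - K,
\end{equation*}
where $d_1,\dots,d_4$ denote the simplified diagonal entries and $K = c\gamma m_B m_T\,\overbar{T}\,\overbar{N}\,\overbar{R}\,\overbar{B}$ is the product of the four cyclic off-diagonal terms.

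The argument then rests on a single sign observation. Since $E_3$ is admissible and distinct from $E_2$, Proposition \ref{existE3} gives $\overbar{T},\overbar{N},\overbar{R} > 0$, and the non-coincidence condition $c > c_2$ forces $\overbar{B} > 0$; hence $K > 0$ strictly. Because $d_4 = 0$, evaluating at the origin gives $p(0) = -K < 0$, whereas the leading term $\lambda^4$ yields $p(\lambda) \to +\infty$ as $\lambda \to +\infty$. By the intermediate value theorem $p$ possesses a positive real root, so $\mathbf{J}(E_3)$ has an eigenvalue with positive real part and $E_3$ is unstable by the linearisation theorem.

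The step I expect to be most delicate is the bookkeeping behind the signs: confirming that the diagonal entries reduce as claimed and, above all, that all four coordinates of $E_3$ are strictly positive under the existence hypothesis, so that $K$ is genuinely positive rather than merely non-negative. Once strict positivity is secured, instability is immediate, and neither an explicit computation of the remaining eigenvalues nor an appeal to the centre manifold theorem is required.
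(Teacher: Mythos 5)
Your proof is correct, and it reaches the paper's conclusion by a leaner route than the one the authors take. The paper likewise evaluates $\mathbf{J}(E_3)$ and works with the characteristic polynomial \eqref{charPol}, but it computes all four coefficients $\alpha_0,\dots,\alpha_3$ explicitly, establishes their signs from $\Theta_1<0<\Theta_2$, and then invokes Descartes' rule of signs (after the substitution $\lambda\mapsto-\lambda$) to conclude that at least one root is real and positive. You instead exploit the sparsity pattern: the off-diagonal entries at positions $(1,2)$, $(2,3)$, $(3,4)$, $(4,1)$ form a single directed $4$-cycle, so the Leibniz expansion of $\det\left(\mathbf{J}(E_3)-\lambda\mathbf{I}\right)$ collapses to $\prod_{i}(d_i-\lambda)-K$ with $K=c\gamma m_B m_T\,\overbar{T}\,\overbar{N}\,\overbar{R}\,\overbar{B}$, and since $d_4=-\theta_B+m_T\overbar{T}=0$ the constant term is exactly $-K$. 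Your sign bookkeeping holds up: under \hyperref[existE3]{Proposition \ref*{existE3}} with $c>c_2$ all four coordinates of $E_3$ are strictly positive, so $K>0$; in fact $K$ simplifies to $\sigma\theta_R\theta_B\left(c-c_2\right)$, which is precisely $-\alpha_0$ in the paper's notation, so both arguments ultimately hinge on the same quantity. What your version buys is that only the constant term is needed (a positive real root then follows from the intermediate value theorem, and a positive real root is a positive real eigenvalue, so instability follows without any hyperbolicity assumption); the signs of $\alpha_1,\alpha_2,\alpha_3$ --- the most laborious part of the paper's computation and the reason Descartes' rule is invoked at all --- become unnecessary. What it gives up is only the supplementary count of negative real roots that Descartes provides, which the paper records but does not use for the instability claim.
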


\begin{proof}
 At the equilibrium point $E_3$, the Jacobian matrix becomes
 
\begin{equation*}
 \mathbf{J}(E_3;c) = 
 \begin{bmatrix}
 -\frac{a b \theta _B}{m_T} & -\frac{c \theta _B}{m_T} & 0 & 0 \\
 0 & \frac{c \sigma  m_T}{a \Theta_1} & \frac{a \gamma \Theta_1}{c m_T} & 0 \\
 0 & 0 & \frac{a \gamma  \kappa  \Theta_1}{\Theta_2} & -\frac{m_B \Theta_2}{a \gamma \Theta_1} \\
\frac{m_T\left(a \gamma  \kappa \Theta_1+\theta_R \Theta_2  \right)}{m_B \Theta_2} & 0 & 0 & 0 \\
\end{bmatrix}\,,
\end{equation*}
with its characteristic polynomial being
\begin{equation} \label{charPol}
    \lambda^4 + \alpha_3 \lambda^3 + \alpha_2 \lambda^2 + \alpha_1 \lambda + \alpha_0 = 0\,,
\end{equation} where
\begin{align*}
    \alpha_0 &= -\frac{ \theta_B\left(a \gamma  \kappa \Theta_1+\theta_R \Theta_2  \right)}{m_T}\,, \\
    \alpha_1 &= \frac{a b c \gamma  \kappa  \sigma  \theta _B }{\Theta_2}\,, \\
    \alpha_2 &= -\frac{a^2 b \gamma  \kappa  \theta _B \Theta_1}{m_T \Theta_2}+\frac{c \gamma  \kappa  \sigma  m_T }{\Theta_2}-\frac{b c \sigma  \theta _B}{\Theta_1}\,, \\
    \alpha_3 &= -\frac{a \gamma  \kappa \Theta_1}{\Theta_2}-\frac{c \sigma  m_T}{a \Theta_1}+\frac{a b \theta _B}{m_T}\,.
\end{align*}
We utilise Descartes' rule of signs to prove our claim. Firstly, let $\alpha_0 \ge 0$, which means that
\begin{equation*} 
   \frac{\theta_B}{m_T} \le \frac{a \gamma  \kappa +a \theta _N \theta _R-c \sigma  \theta _R}{a b \gamma  \kappa +a b \theta _N \theta _R}=\dfrac{c_1-c}{bc_1}\,,
\end{equation*}
which is impossible when $E_3$ exists and does not coincide with $E_2$ from \hyperref[existE3]{Proposition \ref*{existE3}}. Hence,  $\alpha_0 < 0$. Consequently, from   \hyperref[existE3]{Proposition \ref*{existE3}}, we have that when $E_3$ exists and does not coincide with $E_2$, then $\Theta_1 < 0$ and $\Theta_2 > 0$, which means that $\alpha_1,\alpha_2,\alpha_3 > 0$. Next, we apply the transformation
\begin{equation*}
    \lambda \mapsto -\lambda \;,
\end{equation*}
to \eqref{charPol}, to get
\begin{equation} \label{charPolTrans}
   \underbrace{\lambda^4}_{+}  \underbrace{- \alpha_3 \lambda^3}_{-} + \underbrace{\alpha_2 \lambda^2}_{+}  \underbrace{-\alpha_1 \lambda}_{-}  \underbrace{+\alpha_0}_{-} = 0\,.
\end{equation}
The sign changes in the sequence of the polynomial coefficients of \eqref{charPolTrans} are three. Therefore, \eqref{charPolTrans} has exactly one or three positive real roots, which means that \eqref{charPol} has exactly one or three negative real roots. In either case, \eqref{charPol} has at least one positive real root, so it follows that $E_3$ is unstable.
\end{proof}

We have yet to tackle the cases in which $E_1 \equiv E_2$ and $E_2 \equiv E_3$. When $ c = c_1,$ then $ \lambda_{11}\left(c_1\right) = 0$, and when $c=c_2,$ then $ \lambda_{21}\left(c_2\right) = 0$, so in each case a one dimensional center manifold arises, and the linearisation theorem cannot be used. In order to determine the dynamics of our system in those two cases, we utilise the center manifold theorem \cite{jordan2007nonlinear}.

\begin{secproposition}
 When $c = c_1,$ then the equilibrium point $E_1 \equiv E_2$ is locally asymptotically stable.
\end{secproposition}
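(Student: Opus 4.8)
The plan is to use the center manifold theorem, exactly as announced just before the statement. At $c=c_1$ the Jacobian \eqref{jacE1} has the simple zero eigenvalue $\lambda_{11}(c_1)=0$ with eigenvector $\mathbf{u}_{11}=[1,0,0,0]\tran$, while the remaining three eigenvalues $\lambda_{12},\lambda_{13},\lambda_{14}$ are strictly negative. Hence a one-dimensional center manifold tangent to the $T$-axis arises, and the reduction principle says that the stability of $E_1\equiv E_2$ is governed by the flow restricted to this manifold.

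First I would translate the equilibrium to the origin by setting $u=T$, $y=N-\tfrac{a}{c_1}$, $z=R-\tfrac{\kappa}{\theta_R}$, $w=B$, and substitute $c=c_1$. A direct computation shows that the constant terms cancel and the system becomes
\begin{align*}
\dot u &= -ab\,u^2 - c_1\,u\,y\,, \\
\dot y &= -\tfrac{\sigma c_1}{a}\,y - \tfrac{a\gamma}{c_1}\,z - \gamma\, z\, y\,, \\
\dot z &= -\theta_R\, z + \tfrac{\kappa m_B}{\theta_R}\, w + m_B\, w\, z\,, \\
\dot w &= -\theta_B\, w + m_T\, u\, w\,,
\end{align*}
so that $u$ is the center variable and $(y,z,w)$ span the stable block, whose linear part is upper triangular with the strictly negative diagonal $\bigl(-\tfrac{\sigma c_1}{a},-\theta_R,-\theta_B\bigr)$ and is therefore Hurwitz.

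Next I would invoke the center manifold theorem to represent the invariant manifold as a graph $y=h_1(u)$, $z=h_2(u)$, $w=h_3(u)$ with $h_i(0)=0$ and $h_i'(0)=0$, so that each $h_i(u)=O(u^2)$. The crucial observation, which avoids any tangency computation, is that the center equation already carries its quadratic term explicitly: substituting $y=h_1(u)$ yields $\dot u = -ab\,u^2 - c_1\,u\,h_1(u) = -ab\,u^2 + O(u^3)$, since the coupling term is only of cubic order. Thus the reduced flow is $\dot u = -ab\,u^2\bigl(1+o(1)\bigr)$ with $ab>0$.

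The final step, and the point that needs care, is the interpretation of this degenerate reduced equation. On the whole line $\dot u=-ab\,u^2$ is merely semistable: asymptotically stable from the side $u>0$ but unstable from $u<0$. What rescues the claim is that the biologically relevant dynamics lives in the positively invariant region $T\ge0$, i.e. $u\ge0$ (the invariance of $[0,1/b]$ for the $T$ component from \hyperref[BofT]{Proposition \ref*{BofT}}), where $\dot u<0$ for every small $u>0$ and hence $u(t)\to0$. I expect this reconciliation, namely that the center-manifold flow is genuinely semistable yet asymptotically stable once restricted to the admissible half-neighborhood $u\ge0$, to be the main subtlety of the argument; combined with the reduction principle it gives that $E_1\equiv E_2$ is locally asymptotically stable in the region of interest.
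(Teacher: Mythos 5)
Your proof is correct and follows essentially the same route as the paper: a center manifold reduction at $E_1$ with $c=c_1$ yielding the reduced flow $\dot u=-ab\,u^2+O(u^3)$, followed by the observation that the semistable origin is asymptotically stable once restricted to the invariant half-space $u=T\ge0$. The only difference is cosmetic — you exploit the fact that the linear part is already block-diagonal in the shifted original coordinates and so skip the explicit eigenbasis transformation $\mathbf{X}=\mathbf{U_1}\mathbf{Y}$ that the paper carries out, which is a legitimate simplification since the center manifold theorem only requires the stable block to be Hurwitz, not diagonal.
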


\begin{proof}
  For convenience, we firstly move the equilibrium point to the origin, using the transformation 
 \begin{equation} \label{transform1}
   \begin{aligned}
T & \mapsto T+\overbar{T}=T \, , \\
N & \mapsto N+\overbar{N}=N+\frac{a}{c_1} \, , \\
R & \mapsto R+\overbar{R}=R+\frac{\kappa }{\theta _R} \, , \\
B & \mapsto B+\overbar{B}=B \, .
\end{aligned}   
\end{equation}
 
 After replacing \eqref{transform1} to system \eqref{reducedModel}, while utilising that $c=c_1$, we get the transformed system
 
\begin{equation} \label{trSys1}
    \dv{\mathbf{X}}{t} = \mathbf{A_1 X} + \mathbf{P_1}\,,
\end{equation}
where
\begin{equation*}
    \mathbf{X} = \begin{bmatrix}
     T,N,R,B
     \end{bmatrix}\tran\,,
\end{equation*}

\begin{equation*}
    \mathbf{A_1} \coloneqq  \mathbf{J}(E_1;c=c_1)=
\begin{bmatrix}
 0 & 0 & 0 & 0 \\
 0 & -\frac{\sigma c_1}{a} & -\frac{a\gamma}{c_1} & 0 \\
 0 & 0 & -\theta _R & \frac{\kappa  m_B}{\theta _R} \\
 0 & 0 & 0 & -\theta _B \\
    \end{bmatrix} 
\end{equation*}
and 
\begin{equation*}
\mathbf{P_1} = \begin{bmatrix}
 -abT^2-c_1TN, -\gamma RN, m_B  BR ,  m_T  TB 
\end{bmatrix}\tran\,.
\end{equation*}

$\mathbf{A_1}$ is upper diagonal and its eigenvalues  are $\lambda_{11}\left(c_1\right), \lambda_{12}, \lambda_{13}$ and $\lambda_{14}$,  with corresponding eigenvectors $\mathbf{u}_{11}, \mathbf{u}_{12}, \mathbf{u}_{13}$ and $\mathbf{u}_{14}$, as given by \eqref{eigenvalE1}-\eqref{eigenvecE1.b}.

 We bring system \eqref{trSys1} to its normal form with the help of the transformation
 \begin{equation} \label{transNormal1}
     \mathbf{X} = \mathbf{U_1} \mathbf{Y}\,,
 \end{equation}
 where we let 
  \begin{equation*}
     \mathbf{Y} = \begin{bmatrix}
      y_1, y_2, y_3, y_4
     \end{bmatrix}\tran
      \quad \text{ and } \quad
      \mathbf{U_1} =  \begin{bmatrix}
      \mathbf{u}_{11}, \mathbf{u}_{12}, \mathbf{u}_{13}, \mathbf{u}_{14}
     \end{bmatrix}\,.
 \end{equation*}
 By replacing \eqref{transNormal1} to system \eqref{trSys1}, we have that
  \begin{alignat}{2}
                    && \mathbf{U_1}\dv{\mathbf{Y}}{t} &= \mathbf{A_1 U_1 Y} + \mathbf{P_1} \nonumber\\
    \Leftrightarrow && \dv{\mathbf{Y}}{t}   &= \mathbf{U_1^{-1} A_1 U_1 Y} + \mathbf{U_1^{-1} P_1}   \nonumber\\
    \Leftrightarrow && \dv{\mathbf{Y}}{t}   &= \mathbf{\Lambda_1 Y} + \mathbf{F_1} \,, \label{NormalForm}
 \end{alignat}
 where 
 \begin{align*}
    \mathbf{\Lambda_1} &= \text{diag}(\lambda_{11}\left(c_1\right),\lambda_{12},\lambda_{13},\lambda_{14}) \quad \text{ and } \\
    \mathbf{F_1} &= \begin{bmatrix}
      f_{11}, f_{12}, f_{13}, f_{14}
     \end{bmatrix}\tran = \mathbf{U_1^{-1} P}\,
 \end{align*}

 $\mathbf{F_1}$ is a function of $T, N, R$ and  $B$. In order to express $\mathbf{F_1}$ as a function of $ y_1, y_2, y_3$ and $y_4$, we substitute $T, N, R$ and  $B$ from equation \eqref{transNormal1} to $\mathbf{F_1}$, to find that 
 
 \begin{equation*}
 \mathbf{F_1} = 
\begin{bmatrix}
    -a b y_1^2+ w_{11} y_1 y_2 + w_{12} y_1 y_3- c_1 y_1 y_4  \\
    m_T y_1 y_2  \\
    w_{13} y_1 y_2 + w_{14} y_2^2 + m_B y_2 y_3\\
    w_{15} y_1 y_2 + w_{16} y_2 y_3 + w_{17} y_2^2 + w_{18} y_2 y_4 + w_{19} y_3^2 + w_{110} y_3 y_4
\end{bmatrix}\,,
 \end{equation*}
 where $w_{1i}, \;  i=1,2,3,...,10$ are known constants.
 
 Hence, system \eqref{NormalForm} can be written in the form
 \begin{equation*}
     \dv{\mathbf{Y}}{t} = \begin{bmatrix}
    \mathbf{B_1} & \mathbf{0} \\
   \mathbf{0} & \mathbf{C_1}   \\
\end{bmatrix} \mathbf{Y} + \mathbf{F_1}\,,
 \end{equation*}
 or equivalently 
 \begin{subequations}
\begin{align}
      \dv{y_1}{t} & = \mathbf{B_1} y_1 + f_{11}   \,,    \label{y1'} \\   
    \begin{bmatrix}
        \dv{y_2}{t} \\ \dv{y_3}{t} \\ \dv{y_4}{t}
    \end{bmatrix}  
    & =
   \mathbf{C_1} \begin{bmatrix}
        y_2 \\ y_3 \\ y_4
    \end{bmatrix} 
     + \begin{bmatrix}
        f_{12} \\ f_{13} \\ f_{14}
    \end{bmatrix}\,,
  \end{align}
 \end{subequations}
 where
 \begin{equation*}
     \mathbf{B_1} = \lambda_{11}\left(c_1\right) = 0 \quad \text{ and } \quad \mathbf{C_1} = \text{diag}(\lambda_{12},\lambda_{13},\lambda_{14})\,.
 \end{equation*}
 
 We have that $\mathbf{B_1} $ and $\mathbf{C_1}$ are constant matrices, with the eigenvalues of $\mathbf{B_1} $ having zero real part and the eigenvalues of $\mathbf{C_1}$ having negative real part, whereas $f_{1i}$ is smooth with $f_{1i}(0,0,0,0) = 0$ and $Df_{1i} (0,0,0,0) = 0$ for $i=1,2,3,4$.
 From the center manifold theorem, there exists a center manifold which is parametrised by 
 \begin{equation*}
     \begin{bmatrix}
        y_2 \\ y_3 \\ y_4
    \end{bmatrix}  = \mathbf{h_1}(y_1) = \begin{bmatrix}
        h_{12}(y_1) \\ h_{13}(y_1) \\ h_{14}(y_1)
    \end{bmatrix}\,,
 \end{equation*}
 with $\mathbf{h_1}(0)=0$ and $D\mathbf{h_1}(0) = 0$, and satisfying 
 \begin{equation*}
     \mathbf{C_1} \cdot \mathbf{h_1}(y_1) +  \begin{bmatrix}
        f_{12}(y_1, \mathbf{h_1}(y_1) ) \\ f_{13}(y_1, \mathbf{h_1}(y_1) ) \\ f_{14}(y_1, \mathbf{h_1}(y_1) ) 
    \end{bmatrix} = D\mathbf{h_1}(y_1) \cdot \left[\mathbf{B_1} y_1 + f_{11}(y_1, \mathbf{h_1}(y_1)) \right]\,,
 \end{equation*}
while the flow on the center manifold is defined by the differential equation \eqref{y1'}.

Since $\mathbf{h_1}(0)=0$ and $D\mathbf{h_1}(0) = 0$, then by approximating the center manifold with a Taylor series around 0, we get
\begin{equation} \label{taylorApprox}
    \mathbf{h_1}(y_1) = \begin{bmatrix}
        b_{12} y_1^2 + b_{13} y_1^3 + b_{14} y_1^4 + O(y_1^5) \\ c_{12} y_1^2 + c_{13} y_1^3 + c_{14} y_1^4 + O(y_1^5) \\ d_{12} y_1^2 + d_{13} y_1^3 + d_{14} y_1^4 + O(y_1^5)
    \end{bmatrix}\,.
\end{equation}
By replacing \eqref{taylorApprox} to \eqref{y1'} in order to express the terms of $y_2, y_3, y_4$ in $f_{11}$, we get

\begin{equation} \label{flowT1}
    \dv{y_1}{t} = -aby_1^2 + O(y_1^3)\,.
\end{equation}

From  $\mathbf{Y} = \mathbf{U_1^{-1} X}$, we see that $y_1 = T$, so $y_1$ is non-negative. That means that 0 is an asymptotically stable point of equation \eqref{flowT1}, when approached from $y_1 \ge 0$ since $-ab<0$. Thus, from the center manifold theorem, $\mathbf{0}$ is a locally asymptotically stable equilibrium of system \eqref{trSys1}, which proves the proposition.

\end{proof}

\begin{secproposition}
  When $c=c_2$, then the equilibrium point $E_2 \equiv E_3$ is unstable.
\end{secproposition}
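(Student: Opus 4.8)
The plan is to follow the template of the preceding proposition (the $c=c_1$ case) and reduce the dynamics to a one-dimensional centre manifold, the difference being that here the reduced flow will repel, rather than attract, the physically admissible trajectories. First I would translate the coincident equilibrium $E_2\equiv E_3$ to the origin; when $c=c_2$ one has $\overbar{T}=\theta_B/m_T$ and $\overbar{B}=0$, so the shift is $T\mapsto T+\theta_B/m_T$, $N\mapsto N+\overbar{N}$, $R\mapsto R+\overbar{R}$, $B\mapsto B$. This yields a system $\dv{\mathbf{X}}{t}=\mathbf{A_2 X}+\mathbf{P_2}$ with $\mathbf{A_2}\coloneqq\mathbf{J}(E_2;c=c_2)$ and $\mathbf{P_2}$ collecting the quadratic terms. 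Since $\mathbf{A_2}$ is upper triangular, its eigenvalues are the diagonal entries \eqref{eigenvalE2} evaluated at $c=c_2$: the three values $\lambda_{22}=-\theta_R$, $\lambda_{23}=-\sigma c_1/a$ and $\lambda_{24}(c_2)=-a(c_1-c_2)/c_1$ are strictly negative (using $c_1>c_2$), while $\lambda_{21}(c_2)=0$. Hence there is a single zero eigenvalue and a one-dimensional centre manifold, exactly as in the $c=c_1$ case.

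Next I would diagonalise $\mathbf{A_2}$ via $\mathbf{X}=\mathbf{U_2 Y}$ and pass to the normal form $\dv{\mathbf{Y}}{t}=\mathbf{\Lambda_2 Y}+\mathbf{F_2}$, splitting the coordinates into the centre variable $y_1$ and the stable block $(y_2,y_3,y_4)$. The crucial structural observation, which is what ultimately flips the conclusion from stability to instability, is the identity of the centre coordinate: solving $\mathbf{A_2}\tran\mathbf{w}=\mathbf{0}$ for the left eigenvector of the zero eigenvalue gives $\mathbf{w}\propto\begin{bmatrix}0,0,0,1\end{bmatrix}\tran$, so, after normalisation, $y_1=B$, which is non-negative on the physically relevant set. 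Correspondingly, the right eigenvector $\mathbf{v}$ spanning the centre subspace has a strictly positive first (i.e.\ $T$-) component, since back-substitution up the triangular system produces $v_1=\frac{c_2\,a\gamma\kappa m_B}{b\sigma c_1^2\theta_R^2}>0$ with all factors positive.

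I would then invoke the centre manifold theorem as before: the manifold is the graph $(y_2,y_3,y_4)=\mathbf{h_2}(y_1)$ with $\mathbf{h_2}(0)=\mathbf{0}$ and $D\mathbf{h_2}(0)=\mathbf{0}$, so to leading order it is tangent to $\mathbf{v}$ and, in the original variables, $T=v_1 y_1+O(y_1^2)$. The shortcut that avoids computing $\mathbf{F_2}$ explicitly is that the translated $B$-equation is simply $\dv{B}{t}=m_T T B$, the linear part cancelling precisely because $\overbar{T}=\theta_B/m_T$; substituting $B=y_1$ and $T=v_1 y_1+O(y_1^2)$ then gives the reduced flow $\dv{y_1}{t}=m_T v_1\,y_1^2+O(y_1^3)$. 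Since $m_T v_1>0$ and the admissible perturbations satisfy $y_1=B\ge 0$, for every small $y_1>0$ one has $\dv{y_1}{t}>0$, so trajectories are driven away from the origin; by the centre manifold theorem this instability of the reduced flow lifts to instability of $E_2\equiv E_3$.

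The main obstacle is precisely the sign bookkeeping in the second and third steps: one must confirm that the centre direction carries a positive $T$-component and simultaneously that the admissible region lies on the growing side of the degenerate (semi-stable) quadratic $\dv{y_1}{t}=\beta y_1^2$. This is the exact mirror image of the $c=c_1$ computation, where the same mechanism, with $y_1=T\ge 0$ and a negative leading coefficient, produced asymptotic stability; here, with $y_1=B\ge 0$ and a positive coefficient $\beta=m_T v_1$, it produces instability. I would also remark that the existence hypotheses of \hyperref[existE3]{Proposition \ref*{existE3}} ($\Theta_1<0$, i.e.\ $\theta_B/m_T<1/b$) guarantee $c_2>0$ and hence keep every factor in $v_1$ positive, closing the argument.
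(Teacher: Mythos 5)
Your proposal is correct and follows essentially the same route as the paper: translate $E_2\equiv E_3$ to the origin, diagonalise the upper-triangular Jacobian, reduce to the one-dimensional centre manifold with $y_1=B\ge 0$, and conclude instability from a positive quadratic leading coefficient. Your shortcut of reading the reduced flow directly off the exact identity $y_1=B$ together with the translated equation $\dv{B}{t}=m_T TB$ is sound and reproduces the paper's coefficient, since $m_T v_1=\frac{c_2\gamma\kappa\sigma m_B m_T}{ab\left(\gamma\kappa+\theta_N\theta_R\right)^2}=w_{21}>0$.
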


\begin{proof}
For convenience, we firstly move the equilibrium point to the origin, using the transformation 
\begin{equation} \label{transform2}
   \begin{aligned}
T & \mapsto T+\overbar{T}=T+\frac{c_1-c}{bc_1}  \, , \\
N & \mapsto N+\overbar{N}=N+\frac{a}{c_1} \, , \\
R & \mapsto R+\overbar{R}=R+\frac{\kappa}{\theta_R} \, , \\
B & \mapsto B+\overbar{B}=B \, .
\end{aligned}   
 \end{equation}

 After replacing \eqref{transform2} to system \eqref{reducedModel}, while utilising that $c_2=c$, we get the transformed system

\begin{equation} \label{trSys2}
    \dv{\mathbf{X}}{t} = \mathbf{A_2 X} + \mathbf{P_2}\,,
\end{equation}
where
\begin{equation*}
    \mathbf{X} = \begin{bmatrix}
     T,N,R,B
     \end{bmatrix}\tran\,,
\end{equation*}

\begin{equation*}
    \mathbf{A_2} \coloneqq \mathbf{J}(E_2;c=c_2)= 
\begin{bmatrix}
 -\frac{a\left(c_1-c_2\right)}{c_1}  &  -\frac{c_2\left(c_1-c_2\right)}{bc_1}  & 0 & 0 \\
 0 & -\frac{\sigma c_1}{a} & -\frac{a\gamma}{c_1} & 0 \\
 0 & 0 & -\theta _R & \frac{\kappa  m_B}{\theta _R} \\
 0 & 0 & 0 & 0
\end{bmatrix}
\end{equation*}
and 
\begin{equation*}
\mathbf{P_2} = \begin{bmatrix}
 -abT^2-c_2TN, -\gamma RN, m_B  BR ,  m_T  TB 
\end{bmatrix}\tran\,.
\end{equation*}

$\mathbf{A_2}$ is upper diagonal, so its eigenvalues are $ \lambda_{21}\left(c_2\right), \; \lambda_{22}\,, \; \lambda_{23} \text{ and }  \lambda_{24}\left(c_2\right)$
with corresponding eigenvectors
  \begin{align}
     \mathbf{u}_{21} & = \begin{bmatrix}
     \frac{a\gamma\kappa c_2 m_B}{b \theta_R^2 \sigma c_1^2}, -\frac{a^2\gamma\kappa m_B}{\theta_R^2\sigma c_1^2},\frac{\kappa m_B}{\theta _R^2},1 
     \end{bmatrix}\tran\,, \label{eigenvecE2}\\
     \mathbf{u}_{22} & = \begin{bmatrix}
      \frac{a^2 c_2 \gamma \left(c_1-c_2\right)}{bc_1\left(a\left(c_1-c_2\right)-\theta_Rc_1\right)\left(\sigma c_1-a\theta_R\right)},\frac{a^2\gamma}{c_1\left(a\theta_R -\sigma c_1\right)},1,0
     \end{bmatrix}\tran\,, \nonumber\\
     \mathbf{u}_{23} & = \begin{bmatrix}
    -\frac{a c_2\left(c_1-c_2\right)}{b\left(a^2\left(c_1-c_2\right)-\sigma c_1^2\right)},1,0,0
     \end{bmatrix}\tran \quad \text{ and } 
     \mathbf{u}_{24}  = \begin{bmatrix}
     1,0,0,0
     \end{bmatrix}\tran\,.\nonumber
 \end{align}

 We transform system \eqref{trSys2} to its normal form with the help of the transformation
 \begin{equation} \label{transNormal2}
     \mathbf{X} = \mathbf{U_2} \mathbf{Y}\,,
 \end{equation}
 where we let 
  \begin{equation*}
    \mathbf{U_2} =  \begin{bmatrix}
      \mathbf{u}_{21}, \mathbf{u}_{22}, \mathbf{u}_{23}, \mathbf{u}_{24}
     \end{bmatrix}\,.
 \end{equation*}
By replacing \eqref{transNormal2} to system \eqref{trSys2}, we get
 \begin{equation}
      \dv{\mathbf{Y}}{t}   = \mathbf{\Lambda_2 Y} + \mathbf{F_2}\,,           \label{NormalForm2}
 \end{equation}
 where 
 \begin{equation*}
    \mathbf{\Lambda_2} = \text{diag}(\lambda_{21}\left(c_2\right),\lambda_{22},\lambda_{23},\lambda_{24}\left(c_2\right)) \quad \text{and} \quad
    \mathbf{F_2} = \begin{bmatrix}
      f_{21}, f_{22}, f_{23}, f_{24}
     \end{bmatrix}\tran = \mathbf{U_2^{-1} P_2}\,.
 \end{equation*}

By using equation \eqref{transNormal2} and substituting $T, N, R$ and  $B$ to $\mathbf{F_2}$,  we find that
\begin{align*}
     f_{21} &= w_{21} y_1^2 + w_{22} y_1 y_2 + w_{23} y_1 y_3 - m_T y_1 y_4 \,,\\
 f_{22} &= w_{24} y_1^2 + w_{25} y_1 y_2 + w_{26} y_1 y_3 - w_{27} y_1 y_4 \,,\\
 f_{23} &= w_{28} y_1^2 + w_{29} y_2^2 + \gamma y_2 y_3 + w_{210} y_1 y_2 + w_{211} y_1 y_3 - w_{212} y_1 y_4 \,,\\
  f_{24} &=   \begin{multlined}[t]
  w_{213} y_1^2 + w_{214} y_2^2 + w_{215} y_3^2 + w_{216} y_3 y_4 + a b y_4^2 + w_{217} y_2 y_3  \\ +w_{216} y_2 y_4 + w_{217} y_1 y_2 + w_{218} y_1 y_3 - w_{219} y_1 y_4\,,
 \end{multlined}
\end{align*}
 where $w_{2i}, \; i=1,2,3,...,19$ are known constants.

 Hence, system \eqref{NormalForm2} can be written in the form
 \begin{equation*}
     \dv{\mathbf{Y}}{t} = \begin{bmatrix}
    \mathbf{B_2} & \mathbf{0} \\
   \mathbf{0} & \mathbf{C_2}   \\
\end{bmatrix} \mathbf{Y} + \mathbf{F_2}\,,
 \end{equation*}
 or equivalently 
 \begin{subequations}
  \begin{align}
      \dv{y_1}{t} & = \mathbf{B_2} y_1 + f_{21}   \,,    \label{y1'2} \\   
    \begin{bmatrix}
        \dv{y_2}{t} \\ \dv{y_3}{t} \\ \dv{y_4}{t}
    \end{bmatrix}  
    & =
   \mathbf{C_2} \begin{bmatrix}
        y_2 \\ y_3 \\ y_4
    \end{bmatrix} 
     + \begin{bmatrix}
        f_{22} \\ f_{23} \\ f_{24}
    \end{bmatrix}\,,
  \end{align}
 \end{subequations}
 where
 \begin{equation*}
     \mathbf{B_2} = \lambda_{21}\left(c_2\right) = 0 \quad \text{ and } \quad \mathbf{C_2} = \text{diag}(\lambda_{22},\lambda_{23},\lambda_{24}\left(c_2\right))\,.
 \end{equation*}

We have that $\mathbf{B_2} $ and $\mathbf{C_2}$ are constant matrices, with the eigenvalues of $\mathbf{B_2} $ having 0 real part and the eigenvalues of $\mathbf{C_2}$ having negative real part, whereas $f_{2i}$ is smooth with $f_{2i}(0,0,0,0) = 0$ and $Df_{2i} (0,0,0,0) = 0$ for $i=1,2,3,4.$
By the center manifold theorem, there exists a center manifold which is parametrised by 
 \begin{equation*}
     \begin{bmatrix}
        y_2 \\ y_3 \\ y_4
    \end{bmatrix}  = \mathbf{h_2}(y_1) = \begin{bmatrix}
        h_{22}(y_1) \\ h_{23}(y_1) \\ h_{24}(y_1)
    \end{bmatrix}\,,
 \end{equation*}
  with $\mathbf{h_2}(0)=0$ and $D\mathbf{h_2}(0) = 0$, and satisfying 
 \begin{equation*}
     \mathbf{C_2} \cdot \mathbf{h_2}(y_1) +  \begin{bmatrix}
        f_{22}(y_1, \mathbf{h_2}(y_1) ) \\ f_{23}(y_1, \mathbf{h_2}(y_1) ) \\ f_{24}(y_1, \mathbf{h_2}(y_1) ) 
    \end{bmatrix} = D\mathbf{h_2}(y_1) \cdot \left[\mathbf{B_2}y_1 + f_{21}(y_1, \mathbf{h_2}(y_1)) \right]\,,
 \end{equation*}
while the flow on the center manifold is defined by the differential equation \eqref{y1'2}.

 Since $\mathbf{h_2}(0)=0$ and $D\mathbf{h_2}(0) = 0$, then by approximating the center manifold with a Taylor series around 0, we get
\begin{equation} \label{taylorApprox2}
    \mathbf{h_2}(y_1) = \begin{bmatrix}
        b_{22} y_1^2 + b_{23} y_1^3 + b_{24} y_1^4 + O(y_1^5) \\ c_{22} y_1^2 + c_{23} y_1^3 + c_{24} y_1^4 + O(y_1^5) \\ d_{22} y_1^2 + d_{23} y_1^3 + d_{24} y_1^4 + O(y_1^5)
    \end{bmatrix}\,.
\end{equation}
By replacing \eqref{taylorApprox2} to \eqref{y1'2} in order to express the terms of $y_2, y_3, y_4$ in $f_{21}$, we get

\begin{equation} \label{flowT2}
    \dv{y_1}{t} = w_{21}y_1^2 + O(y_1^3) \,,
\end{equation}
 with
 \begin{equation*}
     w_{21} = \frac{c_2 \gamma  \kappa \sigma  m_B m_T}{a b \left(\gamma  k+\theta _N \theta _R\right){}^2} > 0 \,.
 \end{equation*}
 
 From  $\mathbf{Y} = \mathbf{U_2^{-1} X}$, we see that $y_1 = B$, so $y_1$ is non-negative. That means that 0 is an unstable equilibrium point of equation \eqref{flowT2}, when approached from $y_1 \ge 0$ since $w_{21} > 0$. Thus, from the center manifold theorem, $\mathbf{0}$ is an unstable equilibrium point of system \eqref{trSys2}, which proves the proposition.
 
\end{proof}

Having studied every possible case in which the equilibrium points are biologically realistic, we conclude with the main result of the present section as follows. 
\begin{secproposition}[Local stability analysis]
\label{summain}
 For the Admissible equilibrium points of system \eqref{reducedModel}, the following facts hold true. 
  \begin{itemize}
    \item $E_1$ always exists, and is locally stable when $c \ge c_1$, and unstable when $c < c_1$.
    \item $E_2$ exists when $c < c_1$, and is locally stable when $c \ge c_2$, and unstable when $c < c_2$.
    \item $E_3$ exists when $c > c_2$ and $\frac{\theta_B}{m_T} < \frac{1}{b}$, and is always unstable.
    \item $E_1 \equiv E_2$ when $c = c_1$, and $E_2 \equiv E_3$ when $c = c_2$.
  \end{itemize}
\end{secproposition}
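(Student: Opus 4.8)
The plan is to read the statement straight off the propositions already established in this section: no fresh computation is required, and the proof amounts to collating the existence, coincidence, and stability results and then being careful at the two threshold values of $c$ where hyperbolicity fails. I would organise it as three passes — existence/coincidence, interior stability, and endpoint reconciliation — the last of which is the only step that is more than transcription.

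First I would gather the existence and coincidence clauses. That $E_1$ is admissible for every parameter choice is the content of the first existence proposition. Proposition \ref{existE2} gives admissibility of $E_2$ exactly for $c \le c_1$ with $E_1 \equiv E_2$ at $c = c_1$, so $E_2$ is a distinct admissible equilibrium precisely for $c < c_1$; this supplies both the existence clause of the second bullet and the coincidence $E_1 \equiv E_2$ at $c = c_1$. Likewise, Proposition \ref{existE3} gives admissibility of $E_3$ for $c \ge c_2$ together with $\tfrac{\theta_B}{m_T} < \tfrac{1}{b}$, with $E_2 \equiv E_3$ at $c = c_2$, so $E_3$ is distinct and admissible for $c > c_2$, yielding the existence clause of the third bullet and the remaining coincidence of the fourth.

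Next I would assemble the stability conclusions on the hyperbolic range $c \neq c_1, c_2$: the linearisation propositions give $E_1$ stable for $c > c_1$ and a saddle for $c < c_1$, $E_2$ stable for $c_2 < c < c_1$ and a saddle for $c < c_2$, and $E_3$ unstable wherever it is admissible via the Descartes-rule argument on \eqref{charPol}. The delicate step — and the one place the proof is not pure bookkeeping — is folding in the two non-hyperbolic thresholds, where $\lambda_{11}(c_1) = 0$ and $\lambda_{21}(c_2) = 0$ and the linearisation theorem is silent. These are exactly the two centre-manifold propositions: at $c = c_1$ the coincident point $E_1 \equiv E_2$ is locally asymptotically stable, which closes the first inequality to $c \ge c_1$; and at $c = c_2$ the coincident point $E_2 \equiv E_3$ is unstable. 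The main obstacle is thus purely one of endpoint attribution, namely deciding at each threshold which branch the merged equilibrium belongs to: the stable classification at $c = c_1$ extends $E_1$'s stable region upward, while the instability at $c = c_2$ is precisely the value where $E_2$ and $E_3$ have merged and is consistent with $E_3$ being unstable throughout its entire domain of existence. Tracking these coincidences carefully converts the punctured, open-interval linearisation results into the closed-inequality dichotomies asserted in the proposition, and with that the statement follows from the cited propositions.
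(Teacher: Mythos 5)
Your overall strategy is exactly the paper's: the paper offers no separate proof of this proposition at all, presenting it as a pure collation of the preceding existence, coincidence, linearisation and centre-manifold results, and your three passes (existence/coincidence, hyperbolic stability, endpoint reconciliation) reproduce that collation faithfully. The first, third and fourth bullets, and the $c=c_1$ endpoint of the first bullet, follow from the cited propositions precisely as you describe.

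The one place your argument does not go through is the endpoint $c=c_2$ in the second bullet, and it is worth being precise about why. You correctly record that the centre-manifold proposition shows the merged point $E_2\equiv E_3$ is \emph{unstable} at $c=c_2$, but you then attribute that instability to ``$E_3$'s branch'' and conclude that the statement follows. It cannot: at $c=c_2$ there is only one equilibrium, and the second bullet asserts $E_2$ is locally stable for $c\ge c_2$, which includes $c=c_2$. The preceding propositions therefore do not imply the bullet as written --- they contradict it at the endpoint. What the earlier results actually yield is local asymptotic stability of $E_2$ for $c_2<c<c_1$ (extended to $c=c_1$ via the merged stable point $E_1\equiv E_2$) and instability for $c\le c_2$; the ``$\ge$'' in the second bullet should be strict. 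This is a defect of the summary statement itself rather than of your method, but a proof by collation has to confront it rather than smooth it over: either correct the inequality, or note explicitly that the $c=c_2$ case is governed by the fourth bullet together with the centre-manifold proposition and is excluded from the stability claim.
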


\section{Bifurcation Analysis} \label{BifAn}
Starting off, we prove that, under certain sufficient conditions,  IVP has no closed orbits and therefore a Hopf bifurcation can not occur.

\begin{secproposition}[Lack of Hopf bifurcations]
If $$0\leq T_0\leq \frac{1}{b}\text{ and }\frac{\theta_B}{m_T} > \frac{1}{b},$$ then IVP  has no closed orbits.
\end{secproposition}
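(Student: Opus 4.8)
The plan is to exploit the hypothesis $\theta_B/m_T > 1/b$ (equivalently $\Theta_1 = b\theta_B - m_T > 0$, the regime in which $E_3$ fails to exist) together with the positive invariance of $\left[0,1/b\right]$ for the $T$-component, furnished by Proposition~\ref{BofT}. First I would observe that any trajectory of the IVP satisfies $T(t)\le 1/b$ for all $t\ge 0$, so that the $B$-equation \eqref{dB} obeys
\[
\dv{B}{t}=B\left(m_T T-\theta_B\right)\le B\left(\frac{m_T}{b}-\theta_B\right),
\]
and the hypothesis gives $m_T/b-\theta_B<0$. Hence $\dv{B}{t}<0$ whenever $B>0$: the $B$-direction is strictly dissipative inside the biologically relevant region. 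The strategy is to show that this forces every closed orbit into the hyperplane $\{B=0\}$, and then to collapse the remaining dynamics through a cascade of scalar equations.

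For the first reduction I would use that both $\{B=0\}$ and $\{B>0\}$ are forward invariant (since $B=0$ makes $\dv{B}{t}=0$), so a periodic orbit lies entirely in one of them. Suppose a closed orbit of period $\tau$ lies in $\{B>0\}$. Dividing \eqref{dB} by $B$ and integrating over one period gives $\int_0^\tau\left(m_T T-\theta_B\right)\de t=\ln B(\tau)-\ln B(0)=0$ by periodicity; but the integrand is bounded above by $m_T/b-\theta_B<0$, so the integral is strictly negative, a contradiction. Therefore no closed orbit meets $\{B>0\}$, and every closed orbit must satisfy $B\equiv 0$.

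Next I would run the cascade on the invariant set $\{B=0\}$. There the $R$-equation \eqref{dR} decouples to the scalar autonomous ODE $\dv{R}{t}=\kappa-\theta_R R$; since a scalar autonomous equation has monotone solutions and hence admits no nonconstant periodic solution, any closed orbit must have $R\equiv\kappa/\theta_R$. With $R$ frozen, the $N$-equation \eqref{dN} becomes the scalar linear ODE $\dv{N}{t}=\sigma-\left(\theta_N+\gamma\kappa/\theta_R\right)N$, which likewise forces $N\equiv a/c_1$ on the orbit. Finally, with $N$ frozen, the $T$-equation \eqref{dT} reduces to the scalar logistic-type equation $\dv{T}{t}=aT\left(1-bT\right)-c\,(a/c_1)\,T$, again possessing no nonconstant periodic solution. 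Thus any putative closed orbit degenerates to an equilibrium, so the IVP has no closed orbits (and, in particular, no Hopf bifurcation can occur).

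The hard part is not the computation but the careful passage from the full four-dimensional flow to the scalar cascade: I must ensure $T(t)\le 1/b$ holds along the orbit (via Proposition~\ref{BofT}), justify that a periodic orbit cannot straddle $\{B=0\}$ and $\{B>0\}$ by the invariance of each, and at each stage invoke the elementary fact that a scalar autonomous ODE has no nontrivial periodic orbit. The logarithmic-integral argument for the $B$-equation is the linchpin, as it rules out the only direction carrying genuine feedback ($T\to B\to R\to N\to T$); once that cycle is broken by $B\equiv0$, the successive decouplings are routine.
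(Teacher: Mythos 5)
Your argument is correct and rests on the same mechanism as the paper's: Proposition~\ref{BofT} confines $T$ to $\left[0,1/b\right]$, so the hypothesis $\theta_B/m_T>1/b$ makes the $B$-equation strictly dissipative, and your integral of $\dv{t}\ln B$ over one period is exactly the device used in the paper's subsequent Remark. Where you go beyond the paper is in treating the invariant hyperplane $\{B=0\}$ separately: the paper's proof concludes from ``$B$ is strictly decreasing'' that no closed orbits exist, but that monotonicity statement is vacuous when $B\equiv 0$, and your cascade of scalar autonomous equations ($R$, then $N$, then $T$, each admitting only monotone hence non-periodic solutions) is what actually rules out closed orbits lying entirely in that hyperplane. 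So your proof is not a different route so much as a completed version of the paper's; the only point worth making explicit is that a periodic orbit cannot straddle $\{B=0\}$ and $\{B>0\}$, which follows immediately from the closed form \eqref{Bfor} for $B$.
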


\begin{proof}

From \eqref{dB} along with the non-negativity of the solution (see \hyperref[unnneg]{Proposition \ref*{unnneg}}) we get that $$\mathrm{sgn}{\frac{\de B}{\de t}}=\mathrm{sgn}{\left(T-\frac{\theta_B}{m_T}\right)}.$$ Since $T_0\leq 1/b$, we have from \hyperref[BofT]{Proposition \ref*{BofT}} that $T\leq 1/b$, as well. Hence, by $$\frac{\theta_B}{m_T} > \frac{1}{b}$$ we get that $B$ is strictly decreasing, therefore there are no periodic solutions of non-zero period, i.e no closed orbits.
\end{proof}

\begin{secremark}
A different way to prove the lack of Hopf bifurcation is by  supposing that there is a periodic solution $\left(T,N,R,B\right)$ of period $P\neq 0$, hence every component is also a $P$-periodic function. Then, from \eqref{Bfor} we get that $e^{\int_0^t  \left( m_TT(s) - \theta_B \right) \, \mathrm{d}s}$ is also $P$-periodic, and so does $$\int_0^t  \left( T(s) - \frac{\theta_B}{m_T} \right) \, \mathrm{d}s.$$ From this, along with the fact that $m_TT-\theta_B$ is $P$-periodic, we get that $$\int_0^P  \left( T(s) - \frac{\theta_B}{m_T} \right) \, \mathrm{d}s=0.$$ Dealing as before, we have a contradiction, since $$T-\frac{\theta_B}{m_T}\leq \frac{1}{b}-\frac{\theta_B}{m_T}<0.$$
\end{secremark}

The result of \hyperref[summain]{Proposition \ref*{summain}} makes us suspect that two transcritical bifurcations are happening; one between $E_1$ and $E_2$ when $c=c_1$, and one between $E_2$ and $E_3$ when $c=c_2$. Indeed, we set the right-hand side of system \eqref{reducedModel} equal to $\mathbf{G}$, i.e.
  \begin{equation*}
  \mathbf{G}(T,N,R,B) \coloneqq
  \begin{bmatrix}
      aT(1-bT) - cNT \\
      \sigma - \theta_NN - \gamma RN  \\
      \kappa -\theta_RR + m_BBR   \\
       -\theta_BB + m_TTB
  \end{bmatrix}\,,
  \end{equation*}
and we use Sotomayor's theorem \cite{perko2006differential} to prove our observations. 

\begin{secproposition}[Transcritical bifurcation 1]
  System \eqref{reducedModel} experiences a transcritical bifurcation at the equilibrium point $E_1 \equiv E_2$ as the parameter $c$ varies through $c_1$.
\end{secproposition}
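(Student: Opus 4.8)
The plan is to apply Sotomayor's theorem \cite{perko2006differential}. For a system $\dv{\mathbf{X}}{t} = \mathbf{G}(\mathbf{X},c)$ with an equilibrium $\overbar{\mathbf{X}}$ at the critical value $c=c_1$, its hypotheses for a transcritical bifurcation are: the Jacobian $\mathbf{A} \coloneqq D_{\mathbf{X}}\mathbf{G}(\overbar{\mathbf{X}},c_1)$ has a simple zero eigenvalue with right null vector $\mathbf{v}$ and left null vector $\mathbf{w}$, and the three scalars
$$\mathbf{w}\tran\mathbf{G}_c\,,\qquad \mathbf{w}\tran\!\left[D_{\mathbf{X}}\mathbf{G}_c\,\mathbf{v}\right]\,,\qquad \mathbf{w}\tran\!\left[D^2_{\mathbf{X}}\mathbf{G}(\mathbf{v},\mathbf{v})\right]\,,$$
evaluated at $(\overbar{\mathbf{X}},c_1)$, equal $0$, are nonzero, and are nonzero, respectively. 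Here $\mathbf{G}_c = \partial\mathbf{G}/\partial c$. The first (vanishing) condition is exactly what separates the transcritical case from a saddle-node.

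First I would assemble the spectral data at $E_1\equiv E_2$ for $c=c_1$. From \eqref{eigenvalE1} we already know $\lambda_{11}(c_1)=0$ while $\lambda_{12},\lambda_{13},\lambda_{14}$ are strictly negative, so the zero eigenvalue is simple. The right null vector is the eigenvector $\mathbf{u}_{11}=[1,0,0,0]\tran$ of \eqref{eigenvecE1.a}, hence $\mathbf{v}=[1,0,0,0]\tran$. For the left null vector I would solve $\mathbf{A}\tran\mathbf{w}=\mathbf{0}$: since $\mathbf{A}=\mathbf{J}(E_1;c_1)$ in \eqref{jacE1} is upper triangular with a single zero on the diagonal and the remaining diagonal entries nonzero, back-substitution forces $w_2=w_3=w_4=0$, so $\mathbf{w}=[1,0,0,0]\tran$ as well. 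The coincidence $\mathbf{v}=\mathbf{w}=\mathbf{e}_1$ is what trivialises everything that follows, since each Sotomayor quantity then reduces to reading off the first component of a vector, i.e.\ a single partial derivative of $G_1$.

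Next I would evaluate the three quantities. Only $G_1$ depends on $c$, with $\partial G_1/\partial c=-NT$, so $\mathbf{G}_c=[-NT,0,0,0]\tran$; because $\overbar{T}=0$ at $E_1$, this vanishes and the first condition $\mathbf{w}\tran\mathbf{G}_c=0$ holds automatically. For the second, $D_{\mathbf{X}}\mathbf{G}_c$ has first row $[-N,-T,0,0]$ and zero rows otherwise, so at $E_1$ (where $\overbar{N}=a/c_1$, $\overbar{T}=0$) I get $D_{\mathbf{X}}\mathbf{G}_c\,\mathbf{v}=[-a/c_1,0,0,0]\tran$ and therefore $\mathbf{w}\tran[D_{\mathbf{X}}\mathbf{G}_c\,\mathbf{v}]=-a/c_1\neq 0$. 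For the third, feeding $(\mathbf{v},\mathbf{v})=(\mathbf{e}_1,\mathbf{e}_1)$ into the Hessian extracts the pure $\partial^2/\partial T^2$ derivatives; only $G_1$ contributes, with $\partial^2 G_1/\partial T^2=-2ab$, giving $\mathbf{w}\tran[D^2_{\mathbf{X}}\mathbf{G}(\mathbf{v},\mathbf{v})]=-2ab\neq 0$.

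With all three conditions confirmed, Sotomayor's theorem delivers the claimed transcritical bifurcation. I expect no real obstacle: the only points demanding care are verifying simplicity of the zero eigenvalue and correctly computing the left null vector $\mathbf{w}$, both of which follow at once from the triangular form of $\mathbf{A}$ in \eqref{jacE1}. Everything else is the evaluation of three low-order partial derivatives of $\mathbf{G}$, all made immediate by $\mathbf{v}=\mathbf{w}=\mathbf{e}_1$ together with the fact that $\overbar{T}=0$ at $E_1\equiv E_2$.
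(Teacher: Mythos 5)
Your proposal is correct and follows essentially the same route as the paper: Sotomayor's theorem applied at $E_1$ with right and left null vectors both equal to $\mathbf{e}_1$, yielding the same three quantities $0$, $-a/c_1\neq 0$, and $-2ab\neq 0$. The only difference is cosmetic — you spell out the simplicity of the zero eigenvalue and the back-substitution for the left null vector, which the paper leaves as ``simple computations.''
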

\begin{proof}
 The Jacobian matrix of system \eqref{reducedModel} at $E_1 \equiv E_2$, i.e. when $c = c_1$, is given by $\mathbf{A_1}$. From \eqref{eigenvalE1} and \eqref{eigenvecE1.a}, we see that the eigenvector corresponding to the zero eigenvalue is $\mathbf{u}_{11}$, whereas simple computations show us that the eigenvector corresponding to the zero eigenvalue of matrix $\mathbf{A_1}\tran$ is $\mathbf{w_1} = \begin{bmatrix}
      1,0,0,0 
     \end{bmatrix}\tran$.
  
  Furthermore, we have that
  \begin{equation*}
   \pdv{\mathbf{G}(T,N,R,B;c)}{c} = 
   \begin{bmatrix}
      -NT,  0,  0  ,  0
  \end{bmatrix}\tran\,,
  \end{equation*}
  
    \begin{equation*}
   D\pdv{\mathbf{G}(T,N,R,B;c)}{c}\mathbf{u}_{11} = 
   \begin{bmatrix}
      -N,  0,  0  ,  0
  \end{bmatrix}\tran\,,
  \end{equation*}
  and
   \begin{equation*}
   D^2 \mathbf{G}(T,N,R,B) (\mathbf{u}_{11},\mathbf{u}_{11}) = 
   \begin{bmatrix}
      -2ab,  0,  0  ,  0
  \end{bmatrix}\tran\,.
  \end{equation*}
  
  Consequently, when $c=c_1$ we have that
  \begin{align*}
      & \mathbf{w_1}\tran \pdv{\mathbf{G}(E_1;c_1)}{c} =0\,, \\
      & \mathbf{w_1}\tran \left[ D\pdv{\mathbf{G}(E_1;c_1)}{c}\mathbf{u}_{11} \right] = -\frac{a}{c_1} \neq 0\,, \\
      & \mathbf{w_1}\tran \left[ D^2 \mathbf{G}(E_1;c_1) (\mathbf{u}_{11},\mathbf{u}_{11})\right] = -2 a b \neq 0\,.
  \end{align*}
  
  Thus, the conditions of Sotomayor's theorem are satisfied and the proposition is proved.
\end{proof}

\begin{secproposition}[Transcritical bifurcation 2]
  System \eqref{reducedModel} experiences a transcritical bifurcation at the equilibrium point $E_2 \equiv E_3$ as the parameter $c$ varies through $c_2$.
\end{secproposition}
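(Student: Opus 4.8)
The plan is to follow the proof of Transcritical bifurcation~1 and apply Sotomayor's theorem \cite{perko2006differential} at the coincident equilibrium $E_2\equiv E_3$ for the critical value $c=c_2$, where the Jacobian equals the upper-triangular matrix $\mathbf{A_2}$ with simple zero eigenvalue $\lambda_{21}\left(c_2\right)$. First I would record that the corresponding right eigenvector is exactly $\mathbf{u}_{21}$ from \eqref{eigenvecE2}, and that, since the fourth row of $\mathbf{A_2}$ vanishes identically, the left null vector is immediately $\mathbf{w_2}=\begin{bmatrix}0,0,0,1\end{bmatrix}\tran$. I would then assemble the three Sotomayor quantities from $\pdv{\mathbf{G}}{c}=\begin{bmatrix}-NT,0,0,0\end{bmatrix}\tran$, from its state-derivative $D\pdv{\mathbf{G}}{c}$, and from the Hessian contraction $D^2\mathbf{G}\left(\mathbf{u}_{21},\mathbf{u}_{21}\right)$, whose nonzero entries come only from the quadratic parts $-abT^2-cNT$, $-\gamma RN$, $m_BBR$ and $m_TTB$.

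Two of the three conditions cause no trouble. The equilibrium-persistence condition $\mathbf{w_2}\tran\pdv{\mathbf{G}\left(E_2;c_2\right)}{c}=0$ holds trivially, because $\pdv{\mathbf{G}}{c}$ is supported on the first ($T$) coordinate while $\mathbf{w_2}$ is supported on the fourth ($B$) coordinate. The quadratic nondegeneracy is also fine: the fourth entry of $D^2\mathbf{G}\left(\mathbf{u}_{21},\mathbf{u}_{21}\right)$ is $2m_Tu_1u_4$, where $u_1,u_4$ denote the first and fourth components of $\mathbf{u}_{21}$, both strictly positive, so $\mathbf{w_2}\tran\left[D^2\mathbf{G}\left(\mathbf{u}_{21},\mathbf{u}_{21}\right)\right]=2m_Tu_1u_4\neq 0$.

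The main obstacle is the transversality condition $\mathbf{w_2}\tran\left[D\pdv{\mathbf{G}}{c}\,\mathbf{u}_{21}\right]\neq 0$. Since $c$ enters only the tumour equation, $D\pdv{\mathbf{G}}{c}$ has nonzero entries only in its first row, so $D\pdv{\mathbf{G}}{c}\,\mathbf{u}_{21}$ is again supported on the first coordinate, and its contraction with $\mathbf{w_2}$ vanishes. Structurally, $c$ does not couple, to first order and at frozen state, to the centre direction, which here is the tBreg ($B$) axis ($y_1=B$ in the normal form of the earlier centre-manifold computation); this is precisely what differs from bifurcation~1, whose centre direction was the $T$ axis along which $c$ does appear. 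I therefore expect the literal Sotomayor test with $c$ to be degenerate, and this is the step that needs a workaround.

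To finish I would not rely on $\pdv{\mathbf{G}}{c}$ but instead note that the transversality is carried by the motion of the equilibrium branch rather than by the explicit parameter. The crossing eigenvalue satisfies $\left.\frac{\de}{\de c}\lambda_{21}\left(c\right)\right|_{c=c_2}=-\frac{m_T}{bc_1}\neq 0$ by \eqref{eigenvalE2}, so $\lambda_{21}$ passes through zero transversally, while the reduced flow on the one-dimensional centre manifold, already computed in \eqref{flowT2}, is the nondegenerate $\dv{y_1}{t}=w_{21}y_1^2+O\left(y_1^3\right)$ with $w_{21}>0$. Suspending the parameter upgrades \eqref{flowT2} to $\dv{y_1}{t}=\beta\left(c-c_2\right)y_1+w_{21}y_1^2+\cdots$ with $\beta\neq 0$, which is exactly the transcritical normal form; combined with the exchange of stability recorded in \hyperref[summain]{Proposition \ref*{summain}}, this proves the claim. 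Equivalently, one may reparametrise and verify the Sotomayor conditions with a parameter that genuinely appears in the $B$-equation, such as $\theta_B$ or $m_T$, for which the transversality term is manifestly nonzero.
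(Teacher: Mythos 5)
Your proposal is correct, and it isolates exactly the difficulty that the paper itself flags: since $c$ enters only the $T$-equation while the left null vector $\mathbf{w_2}=\begin{bmatrix}0,0,0,1\end{bmatrix}\tran$ is supported on the $B$-coordinate, the Sotomayor transversality condition $\mathbf{w_2}\tran\bigl[D\pdv{\mathbf{G}}{c}\,\mathbf{u}_{21}\bigr]\neq 0$ fails for the parameter $c$. Where you diverge is in the workaround. The paper observes that $c=c_2$ is equivalent to $\theta_B=\theta_B^*$ and simply re-runs Sotomayor with $\theta_B$ as the bifurcation parameter, for which $\pdv{\mathbf{G}}{\theta_B}=\begin{bmatrix}0,0,0,-B\end{bmatrix}\tran$ lives in the $B$-direction and the transversality term is $-1$; this is precisely the alternative you mention in your closing sentence. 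Your primary route instead argues via the parameter-suspended centre-manifold reduction: the critical eigenvalue $\lambda_{21}\left(c\right)$ crosses zero with $\left.\dv{\lambda_{21}}{c}\right|_{c=c_2}=-\frac{m_T}{bc_1}\neq 0$, and the reduced flow \eqref{flowT2} has a nondegenerate quadratic term $w_{21}>0$, which together yield the transcritical normal form. This is a standard and valid characterisation, but to make it airtight you should note why the coefficient $\beta$ of the cross term in the suspended equation coincides with the eigenvalue derivative: the branch $E_2\left(c\right)$ persists with $B=0$ for all $c$ near $c_2$ (even though its $T$-coordinate moves), so $y_1=0$ remains invariant and the linearisation of the reduced equation along that branch is exactly $\lambda_{21}\left(c\right)$. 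The paper's reparametrisation buys a shorter, purely algebraic verification within Sotomayor's framework; your normal-form argument buys independence from the choice of parameter and reuses the centre-manifold computation already done for the stability analysis. Both establish the claim.
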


\begin{proof}
  Since 
  \begin{equation*}
      c = c_2 \Leftrightarrow \theta_B = \frac{a \theta _N m_T \theta _R+a \gamma  \kappa  m_T-c \sigma  m_T \theta _R}{a b \gamma  \kappa +a b \theta _N \theta _R} =: \theta_B^*\,,
  \end{equation*}
  we prove the equivalent proposition of system \eqref{reducedModel} experiencing a transcritical bifurcation at the equilibrium point $E_2 \equiv E_3$ as the parameter $\theta_B$ varies through $\theta_B^*$, as the conditions of Sotomayor's theorem cannot be satisfied in the case of parameter $c$ varying through $c_2$.
  
  We have that the Jacobian of system \eqref{reducedModel} at $E_2 \equiv E_3$, i.e. when $\theta_B = \theta_B^*$, is equal to $   \mathbf{A_2}=\mathbf{J}(E_2;c=c_2)=\mathbf{J}(E_2;\theta_B=\theta_B^*)$.
 
From \eqref{eigenvalE2} and \eqref{eigenvecE2}, we see that the eigenvector corresponding to the zero eigenvalue is $\mathbf{u}_{21}$, whereas simple computations show us that the eigenvector corresponding to the zero eigenvalue of matrix $\mathbf{A_2}\tran$ is $\mathbf{w_2} = \begin{bmatrix}
       0, 0, 0, 1 
     \end{bmatrix}\tran$.

Furthermore, we have that
  \begin{equation*}
   \pdv{\mathbf{G}(T,N,R,B;\theta_B)}{\theta_B} = 
   \begin{bmatrix}
      0,0,0,-B
  \end{bmatrix}\tran,
  \end{equation*}
    \begin{equation*}
   D\pdv{\mathbf{G}(T,N,R,B;\theta_B)}{\theta_B}\mathbf{u}_{21} = 
   \begin{bmatrix}
      0, 0, 0, -1
  \end{bmatrix}\tran\,,
  \end{equation*}
  and
   \begin{equation*}
   D^2 \mathbf{G}(T,N,R,B;\theta_B) (\mathbf{u}_{21},\mathbf{u}_{21}) = 
   \begin{bmatrix}
     0,\frac{2 \gamma ^2 \kappa ^2 \sigma  m_B^2}{\theta _R^2 \left(\gamma  \kappa +\theta _N \theta _R\right){}^2},\frac{2 \kappa  m_B^2}{\theta _R^2},\frac{2 c \gamma  \kappa  \sigma  m_B m_T}{a b \left(\gamma  \kappa +\theta _N \theta _R\right){}^2}
  \end{bmatrix}\tran\,.
  \end{equation*}

Consequently, when $\theta_B = \theta_B^*$ we have that
  \begin{align*}
      & \mathbf{w_2}\tran \pdv{\mathbf{G}(E_2;\theta_B^*)}{\theta_B} =0\,, \\
      & \mathbf{w_2}\tran \left[ D\pdv{\mathbf{G}(E_2;\theta_B^*)}{\theta_B} \right] = -1 \neq 0\,, \\
      & \mathbf{w_2}\tran \left[ D^2 \mathbf{G}(E_2;\theta_B^*) (\mathbf{u}_{21},\mathbf{u}_{21})\right] = \frac{2 c \gamma  \kappa  \sigma  m_B m_T}{a b \left(\gamma  \kappa +\theta _N \theta _R\right){}^2} \neq 0\,.
  \end{align*}
  
  Thus, the conditions of Sotomayor's theorem are satisfied and the proposition is proved.

\end{proof}

The value and stability of the first coordinate - that is the breast cancer cell population - at equilibrium, depending on parameter $c$ is given in \hyperref[Fig3]{Figure \ref*{Fig3}}.
\begin{figure}[!htbp]
\begin{subfigure}{.5\textwidth}
  \centering
\includegraphics[width=1\linewidth]{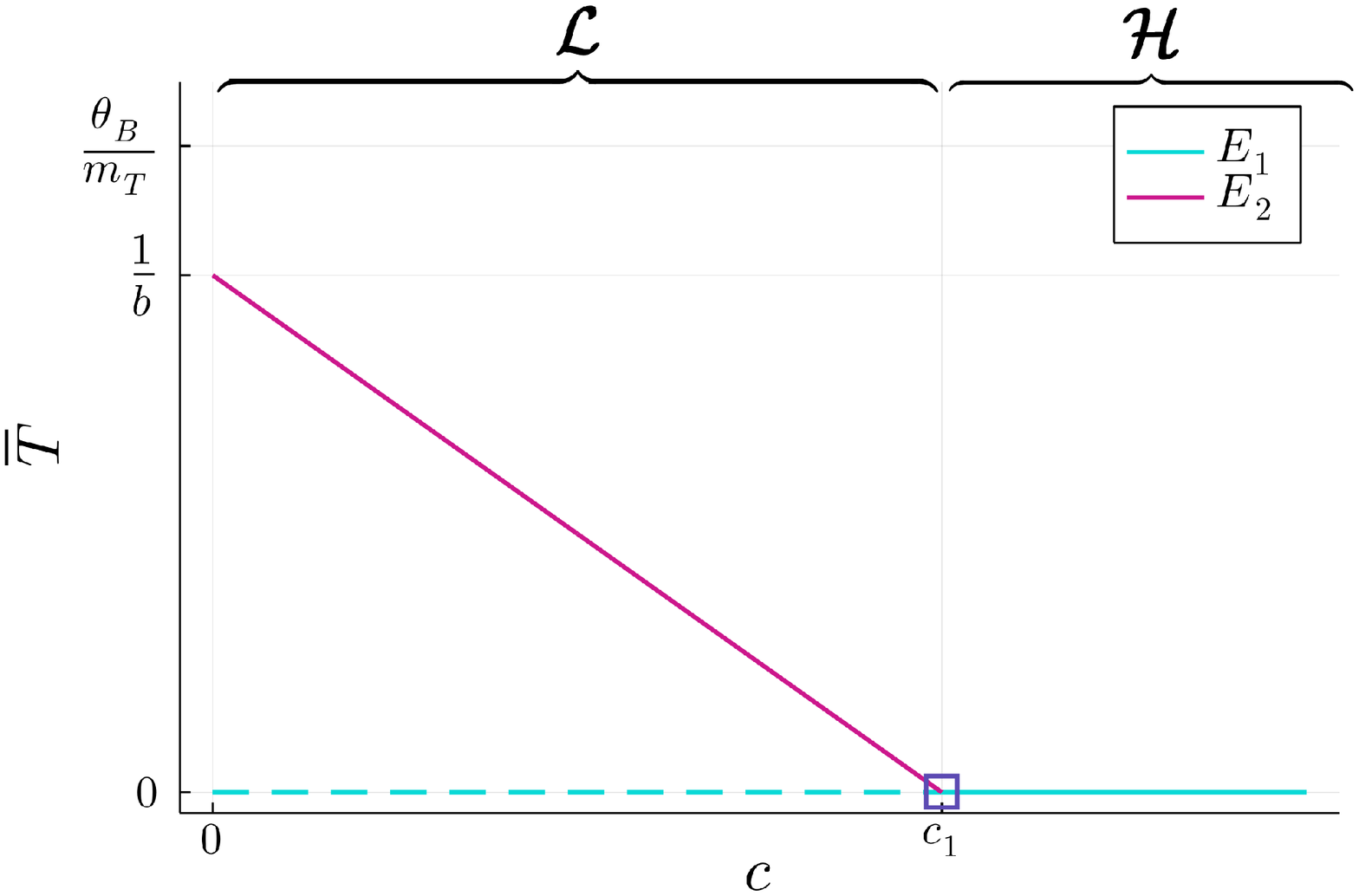}
 \caption{The tumour carrying capacity is bounded from above, i.e. $\frac{\theta_B}{m_T}\geq \frac{1}{b}$.}
  \label{Fig3.a}
\end{subfigure}
\begin{subfigure}{.5\textwidth}
  \centering
 \includegraphics[width=1\linewidth]{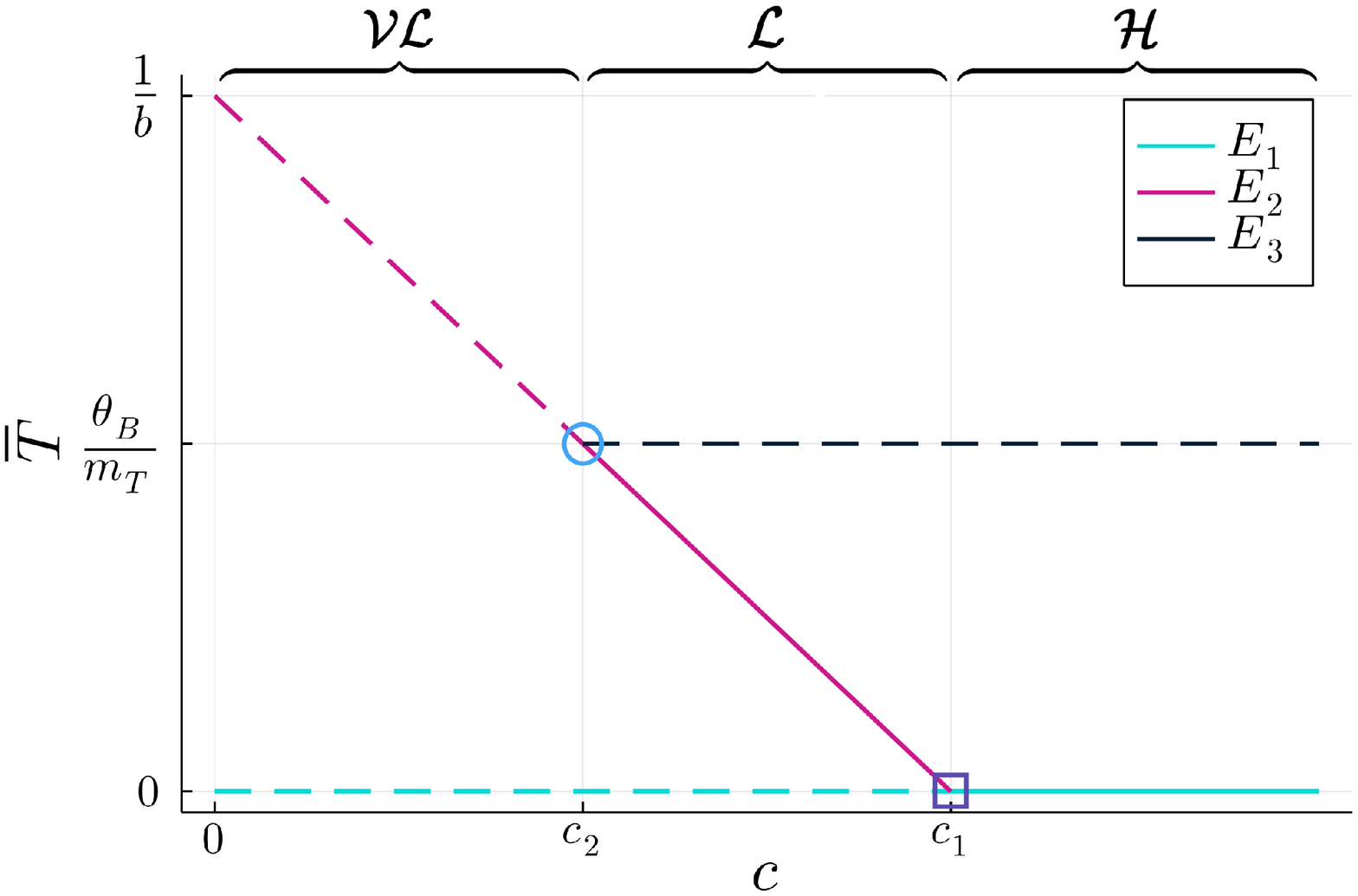}
  \caption{The tumour carrying capacity is bounded from below, i.e. $\frac{\theta_B}{m_T}<\frac{1}{b}$.}
  \label{Fig3.b}
\end{subfigure}
\caption{Projection of the bifurcation diagram of system \eqref{reducedModel} onto the $\overbar{T}$--$c$~plane. Solid line (---): Stable equilibrium. Dashed line (- -): Unstable equilibrium. $\square$: Transcritical bifurcation 1. $\bigcirc$: Transcritical bifurcation 2. $\mathcal{VL}$: Region in which $c$ is \textquote{very low}, i.e. $c<c_2$ when $c_2>0$. $\mathcal{L}$: Region in which $c$ is \textquote{low}, i.e. $c<c_1$ when $c_2<0$ and $c_2<c<c_1$ when $c_2>0$. $\mathcal{H}$: Region in which $c$ is \textquote{high}, i.e. $c>c_1$.}
   \label{Fig3}
\end{figure}


\section{Numerical Simulations} \label{reducedModelNumericalSimulationsSection}

In this section, we numerically solve our model, using Julia and the suite DifferentialEquations.jl \cite{rackauckas2017differentialequations}, under two antipodal theoretical scenarios concerning the absence and the presence of tBregs. Contrary to our previous local stability analysis, numerical simulations help us develop an intuition about the global behaviour of our model. Additionally, we perform numerical parameter sensitivity analysis, in order to further explore the dependence of the solution of our model to its parameters, under different initial conditions of tBregs.

The parameter values used in our simulations are listed in \hyperref[reducedModelParameterTable]{Table \ref*{reducedModelParameterTable}}, unless otherwise stated, and their derivation is explained in \hyperref[secB1]{Appendix}. In particular, $m_T$ either becomes
\begin{equation*}
    m_T = m_T^A \coloneqq  5 \cdot 10^{-15}\; \text{cell}^{-1}  \, \cdot \, \text{day}^{-1} \;,
\end{equation*} or
    
\begin{equation*}
    m_T = m_T^B \coloneqq  5 \cdot 10^{-10}\; \text{cell}^{-1}  \, \cdot \, \text{day}^{-1} \; .
\end{equation*}
Thus, choosing the value $m_T^A$ yields $\frac{\theta_B}{m_T} > \frac{1}{b}$, whereas choosing the value $m_T^B$ yields $\frac{\theta_B}{m_T} < \frac{1}{b}$, allowing us to explore the different dynamics of IVP.

Considering the values of two bifurcation points of system \eqref{reducedModel} with respect to parameter $c$, we have that $c_1 = 2.88 \cdot 10^{-10}\; \text{cell}^{-1}  \, \cdot \, \text{day}^{-1}$, since $c_1$ is independent from the value of $m_T$. When $m_T = m_T^A$, then $c_2<0$, as expected from our bifurcation analysis, whereas when $m_T = m_T^B$, then $c_2 = 5.76 \cdot 10^{-11}\; \text{cell}^{-1}  \, \cdot \, \text{day}^{-1}$.

Finally, we choose $N_0 = 5 \cdot 10^8$ cells and $R_0 = 2 \cdot 10^8$ cells.

{\footnotesize
\begin{table}[h]
\caption{Units and values of the parameters of system \eqref{reducedModel}. For estimations arising from literature see \hyperref[secB1]{Appendix \ref*{secB1}}.}
\label{reducedModelParameterTable}
\noindent\makebox[\textwidth]{
\begin{tabular}{p{.5cm}p{2.5cm}p{2cm}p{5cm}}
\toprule
\textbf{Prm.} &    \textbf{Unit} & \textbf{Value}  & \textbf{Source}    \\ 
\midrule
$a$    & day$^{-1}$      & 0.15    & \cite{bitsouni2021mathematical}\\
$b$  & cell$^{-1}$  & $1 \cdot 10^{-9}$ & \cite{bitsouni2021mathematical}  \\
$c$ & cell$^{-1} \cdot$ day$^{-1}$ & varied                 & \begin{center} --- \end{center} \\ 
\midrule
$\sigma$    & cell $\cdot$ day$^{-1}$           & $5 \cdot 10^7$    & Estimated from \cite{zhang}    \\
$\theta_N$        & day$^{-1}$                        & 0.07              & Estimated from \cite{zhang}     \\
$\gamma$      & cell$^{-1} \cdot$ day$^{-1}$ & $1 \cdot 10^{-10}$     & Estimated from equilibrium point\\ 
\midrule
$\kappa$        & cell $\cdot$ day$^{-1}$           & $1 \cdot 10^7$    & Estimated from equilibrium point \\
$\theta_R$             & day$^{-1}$                        & 0.03851           & Estimated from \cite{mabarrack2008}  \\
$m_B$        & cell$^{-1} \cdot$ day$^{-1}$ & $3 \cdot 10^{-8}$      & No data found  \\
\midrule
$\theta_B$      & day$^{-1}$                        & 0.4               & No data found\\
$m_T$   & cell$^{-1} \cdot$ day$^{-1}$ & varied                 & \begin{center} --- \end{center} \\
\bottomrule
\end{tabular}}
\end{table}
}

\subsection{The Scenario of Absent tBregs ($B_0 = 0$)}
As can be seen from \eqref{Bfor}, when $B_0 = 0$, we have that $B(t) = 0, \; t \in \mathbb{R}_{\ge 0}$. Therefore, system \eqref{reducedModel} becomes a 3D system with only breast cancer cells, NK cells and Tregs existing in the body.

Plotting the phase portrait of the resulting 3D system, for different values of $c$, we get \hyperref[Fig4]{Figure \ref*{Fig4}}. We notice that all trajectories move towards an equilibrium point for all $c$. This equilibrium point moves along the line $$\{(T,N,R) \in \mathbb{R}_{\ge 0 }^3: N=\frac{\sigma  \theta _R}{\gamma  \kappa +\theta _N \theta _R}, R= \frac{\kappa }{\theta _R}\},$$ with $T$ getting bigger, as $c$ gets smaller, and vice versa. When $c>c_1$, T reaches zero. We observe the same type of transcritical bifurcation that happens between $E_1$ and $E_2$ when $c=c_1$ for system \eqref{reducedModel}, also happens for the 3D system. Even with no rigorous result at hand regarding the global stability of IVP with tBregs no longer in the picture, it is apparent that for the 3D system the equilibrium point $E_1$ is globally stable when $c>c_1$ and $E_2$ is globally stable when $c<c_1$. The crux of the scenario in question is:

\noindent\fbox{\parbox{\textwidth}{\begin{center}\textbf{Conclusion $1_a$:\\} In the absence of tBregs, the breast tumour will reach its carrying capacity due to NK cell insufficiency, i.e. $T \nearrow \frac{1}{b}$ when $c \searrow 0$.  \end{center}}} \label{ConclusionA}
\begin{figure}[h!]
 \makebox[\textwidth][c]{\includegraphics[width=1\textwidth]{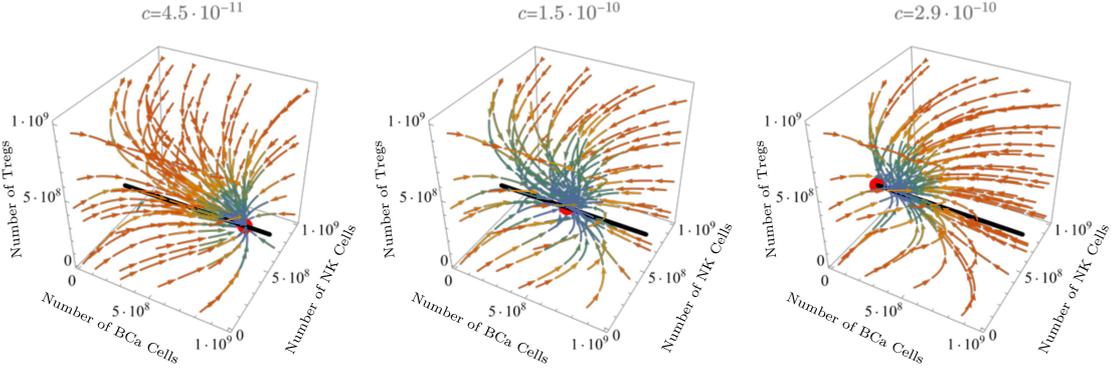}}%
  \caption{Phase portrait of system \eqref{reducedModel} with $B_0 = 0$, for different values of $c$. The equilibrium point $(\overbar{T},\overbar{N},\overbar{R})$ pictured by the red sphere moves along the black line, which is given by $\{(T,N,R) \in \mathbb{R}_{\ge 0 }^3: N=\frac{\sigma  \theta _R}{\gamma  \kappa +\theta _N \theta _R}, R= \frac{\kappa }{\theta _R}\}$. The value of $\overbar{T}$ tends to 0, as $c$ increases, whereas tends to $\frac{1}{b}$, as $c$ tends to 0.}
  \label{Fig4}
\end{figure}
\subsection{The Scenario of Present tBregs ($B_0 \neq 0$)} \label{sec:tBregsExist}
In this section, we study the scenario in which the initial condition of tBregs is not 0. We divide our analysis into three sections: \hyperref[sec:GE1b]{Section \ref*{sec:GE1b}} in which $m_T = m_T^A$ and therefore $\frac{\theta_B}{m_T} \ge \frac{1}{b}$, and \hyperref[sec:L1b]{Section \ref*{sec:L1b}} in which $m_T = m_T^B$ and therefore $\frac{\theta_B}{m_T} < \frac{1}{b}$. In both first sections, we individually study each of the involved regions, $\mathcal{VL}$, $\mathcal{L}$ and $\mathcal{H}$, admitted by our bifurcation analysis and depicted in \hyperref[Fig3]{Figure \ref*{Fig3}}. In \hyperref[sec:importc]{Section \ref*{sec:importc}} we investigate the influence of  the rate of NK-induced breast cancer cell death, $c$, to our model.

\subsubsection{Bounded-From-Above Tumour Carrying Capacity ($\frac{\theta_B}{m_T} \ge \frac{1}{b}$)} \label{sec:GE1b}
Here, we show that the dynamics of the solution of IVP are eventually the same with the ones of the previous scenario. However, there are interesting initial differences between the two scenarios. 

When $m_T = m_T^A$, we have that  $\frac{\theta_B}{m_T} \ge \frac{1}{b}$, which means that B is strictly decreasing. Additionally, only equilibria $E_1$ and $E_2$ exist. \hyperref[Fig5]{Figure \ref*{Fig5}} and \hyperref[Fig6]{Figure \ref*{Fig6}} depict both $T$ and $B$, for $c$ in region $\mathcal{L}$ and $\mathcal{H}$, respectively. As can be seen in the aforementioned figures, $T$ forms a single peak, which increases as $B_0$ increases. Such an increase of $T$ continues until $T$ reaches its carrying capacity, $\frac{1}{b}$, and after that it gradually flattens out. The greater $B_0$ is, the longer $T$ stays close to $\frac{1}{b}$, before eventually tending to $\overbar{T}$ of $E_2$. Consequently, as $B_0$ increases, the metastatic potential becomes higher. This comes in agreement with the results of \cite{olkhanud2011tumor}, where tBregs were associated with breast cancer metastasising to the lungs. 
\begin{figure}[h!]
 \makebox[\textwidth][c]{\includegraphics[width=1\textwidth]{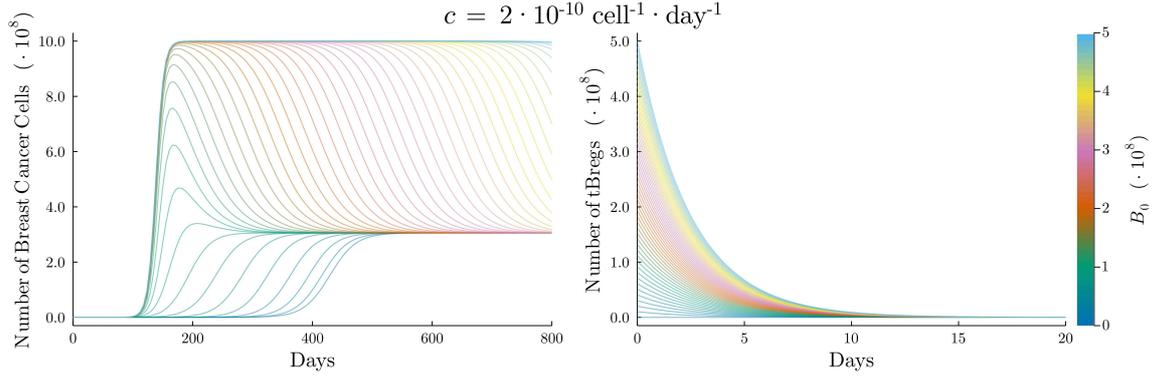}}%
    \caption{Ensemble simulations of IVP, with $B_0$ taking values in the 30-point discretisation of the interval $\left[ 0, 5\cdot 10^{8} \right] \, \cdot \, \text{cells}$,  $T_0 = 5$ cells, and $c=2 \cdot 10^{-10}\; \text{cell}^{-1}  \, \cdot \, \text{day}^{-1}$ (region $\mathcal{L}$). When $B\to 0$, $T$ eventually tends to $\overbar{T}>0$ of $E_2$.}
  \label{Fig5}
\end{figure}
In \hyperref[Fig6]{Figure \ref*{Fig6}}, $T$ eventually vanishes, instead of tending to a positive value, as in \hyperref[Fig5]{Figure \ref*{Fig5}}. Thus, in the light of \hyperref[Fig4]{Figure \ref*{Fig4}}, when $B\to0$, the dynamics of IVP are similar to the corresponding ones of the problem with the simplified 3D system for $B=0$.
\begin{figure}[h!]
 \makebox[\textwidth][c]{\includegraphics[width=1\textwidth]{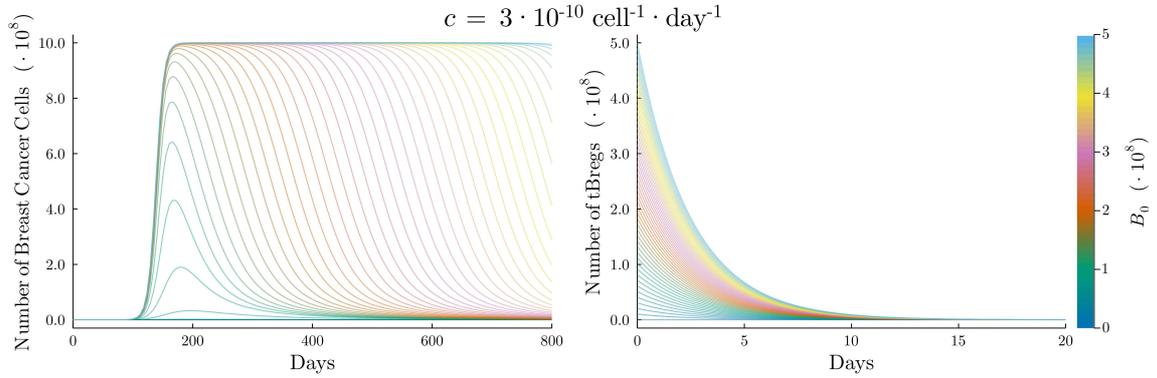}}%
      \caption{Ensemble simulations of IVP, with $B_0$ taking values in the 30-point discretisation of the interval $\left[ 0, 5\cdot 10^{8} \right] \, \cdot \, \text{cells}$ ,  $T_0 = 5$ cells, and $c=3 \cdot 10^{-10}\; \text{cell}^{-1}  \, \cdot \, \text{day}^{-1}$ (region $\mathcal{H}$). When $B\to 0$, $T$ eventually tends to $\overbar{T}=0$ of $E_1$.}
  \label{Fig6}
\end{figure}

All in all, we reach the following conclusion: 

\noindent\fbox{\parbox{\textwidth}{\begin{center}\textbf{Conclusion 2:\\} In the presence of tBregs, if tumour carrying capacity is bounded from above, then  the breast tumour will initially reach its carrying capacity due to the effect of tBregs, i.e. initially $T \nearrow \frac{1}{b}$ when $B_0 \nearrow $. \end{center}}} \label{conclusionB}

\subsubsection{Bounded-From-Bellow Tumour Carrying Capacity ($\frac{\theta_B}{m_T} < \frac{1}{b}$)} \label{sec:L1b}
 When $\frac{\theta_B}{m_T} < \frac{1}{b}$, $B$ can become an increasing function. Hence, even if tBregs start as just a few cells, they could cause breast cancer cells to reach their carrying capacity. 
 
\paragraph{Region $\mathcal{L}$} \label{sec:inBetween}
When $c=9 \cdot 10^{-11}\; \text{cell}^{-1}  \, \cdot \, \text{day}^{-1}$, we have that $c_2<c<c_1$. Depending on the initial condition of tBregs, the dynamics of IVP vary greatly. When $B_0 = 5 \cdot 10^5$ cells, we get \hyperref[Fig7]{Figure \ref*{Fig7}}. In \hyperref[Fig7]{Figure \ref*{Fig7}}, we see that tBregs, despite increasing up until around the 20th day, are unable to continue doing so and are eventually depleted. tBreg levels did not reach a sufficient population number that would allow breast cancer cells to grow to their carrying capacity. Instead, breast cancer cells stabilise to equilibrium point $E_2$.
\begin{figure}[h!]
 \makebox[\textwidth][c]{\includegraphics[width=1\textwidth]{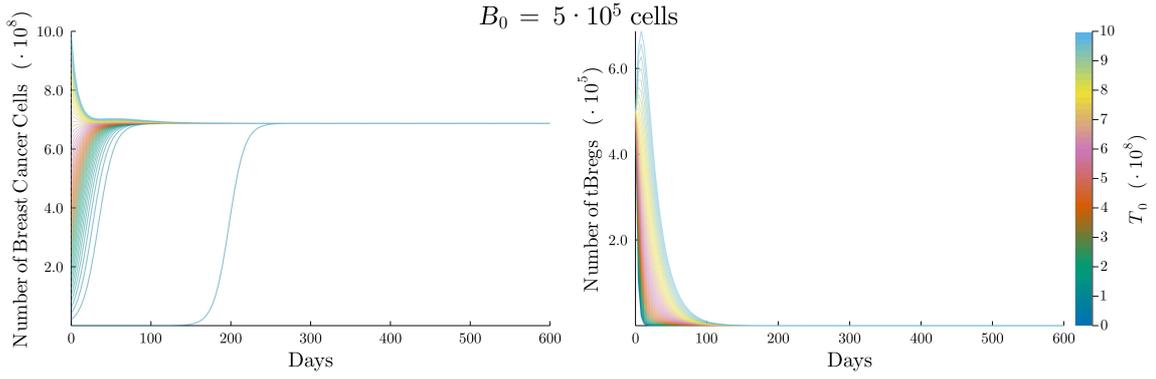}}%
      \caption{Ensemble simulations of IVP, with $T_0$ taking values in the 50-point discretisation of the interval $\left[ 0, 10^{9} \right] \, \cdot \, \text{cells}$, $c=9 \cdot 10^{-11}\; \text{cell}^{-1}  \, \cdot \, \text{day}^{-1}$ and  $B_0 = 5 \cdot 10^5$ cells. Left: Number of tBregs. Right: Number of breast cancer cells. We notice that when $B_0 = 5 \cdot 10^5$ cells, the trajectories of our model move towards $E_2$.}
  \label{Fig7}
\end{figure}
Increasing the initial condition of tBregs to $B_0 = 5 \cdot 10^6$ cells, we get \hyperref[Fig8]{Figure \ref*{Fig8}}. In \hyperref[Fig8]{Figure \ref*{Fig8}}, we see that when the initial breast cancer cell population is around $6 \cdot 10^8$ cells, tBregs do not get depleted. Additionally, in every simulation where tBregs do not deplete, they cause breast cancer cells to move away from equilibrium point $E_2$ and reach their carrying capacity.

\begin{figure}[h!]
 \makebox[\textwidth][c]{\includegraphics[width=1\textwidth]{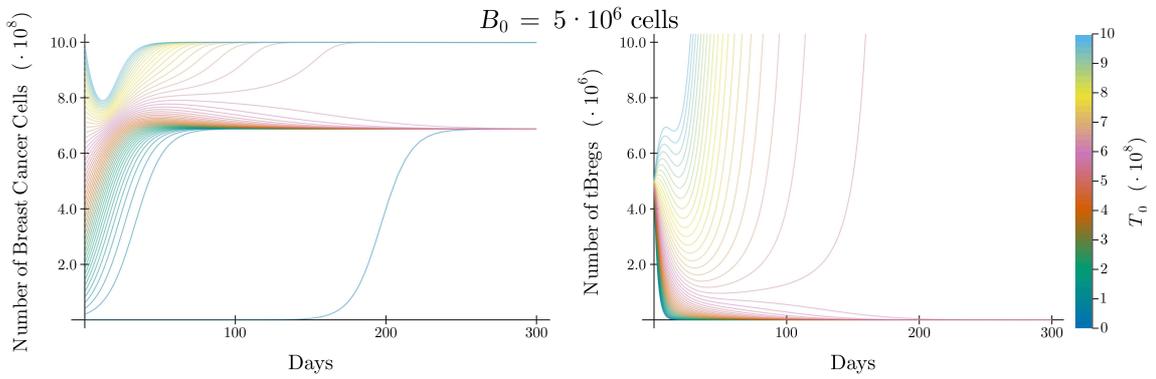}}%
     \caption{Ensemble simulations of IVP, with $T_0$ taking values in the 50-point discretisation of the interval $\left[ 0, 10^{9} \right] \, \cdot \, \text{cells}$, $c=9 \cdot 10^{-11}\; \text{cell}^{-1}  \, \cdot \, \text{day}^{-1}$ and  $B_0 = 5 \cdot 10^6$ cells. Left: Number of breast cancer cells. Right: Number of tBregs. We notice that, unlike the corresponding simulations of \hyperref[Fig7]{Figure \ref*{Fig7}}, when $B_0 = 5 \cdot 10^6$ cells, and if $T_0 \ge 6 \cdot 10^8$, then the breast tumour reaches its carrying capacity.}
  \label{Fig8}
\end{figure}

\paragraph{Region $\mathcal{VL}$} \label{sec:smaller}
When $c=2 \cdot 10^{-11}\; \text{cell}^{-1}  \, \cdot \, \text{day}^{-1}$, we have that $c<c_2$. Region $\mathcal{VL}$ is a particular interesting case, since our local stability analysis showed that both $E_1$ and $E_2$ are unstable, hence making the dynamics of IVP difficult to determine without numerical simulations. In  \hyperref[Fig9]{Figure \ref*{Fig9}}, we notice that despite the initial number of tBregs being as low as $B_0 = 50$ cells, tBregs manage to proliferate. Even though up until around the 200th day breast cancer cells seem to have stabilised at equilibrium point $E_2$, they proceed to start increasing up until they reach their carrying capacity. An interesting observation is that tBreg levels need to be around $10^8$ cells in order for breast cancer cells to reach their carrying capacity, much like the case studied for the scenario of $\frac{\theta_B}{m_T} \ge \frac{1}{b}$ in \hyperref[Fig6]{Figure \ref*{Fig6}}. 

\begin{figure}[h!]
 \makebox[\textwidth][c]{\includegraphics[width=1\textwidth]{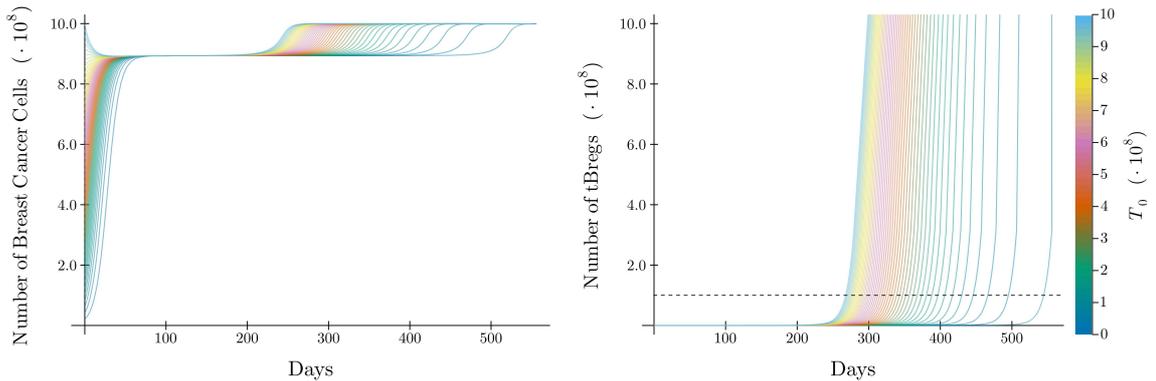}}%
     \caption{Ensemble simulations of IVP, with $T_0$ taking values in the 50-point discretisation of the interval $\left[ 0, 10^{9} \right] \, \cdot \, \text{cells}$, $c=2 \cdot 10^{-11}\; \text{cell}^{-1}  \, \cdot \, \text{day}^{-1}$ and  $B_0 = 50$ cells. Left: Number of breast cancer cells. Right: Number of tBregs. Even though breast cancer cells seem to have stabilised to their corresponding $E_2$ value for more than 100 days in each simulation, when tBerg levels are about $10^8$ cells (dashed line), breast cancer cells start increasing again and reach their carrying capacity.}
  \label{Fig9}
\end{figure}

Finally, we notice that when tBregs are scarce, even if the initial number of breast cancer cells is very close to the tumour carrying capacity, breast cancer cells are not able to maintain their high levels and decrease until they reach $E_2$. However, this changes when tBregs become around $10^8$ cells, causing breast cancer cells to increase to their carrying capacity, while also being able to maintain their high numbers.

\paragraph{Region $\mathcal{H}$} \label{sec:bigger}
When $c = 3 \cdot 10^{-10}\; \text{cell}^{-1}  \, \cdot \, \text{day}^{-1}$, we have that $c>c_1$. In this case, the ability of NK cells at lysing breast cancer cells is at its highest, so we expect some of our simulations to result in the elimination of the tumour. As anticipated, we see that in \hyperref[Fig10]{Figure \ref*{Fig10}}, when the initial condition of tBregs is $B_0 = 5 \cdot 10^{5}$, breast cancer cells, along with tBregs, are quickly depleted from the body for all simulations.

\begin{figure}[h!] 
 \makebox[\textwidth][c]{\includegraphics[width=1\textwidth]{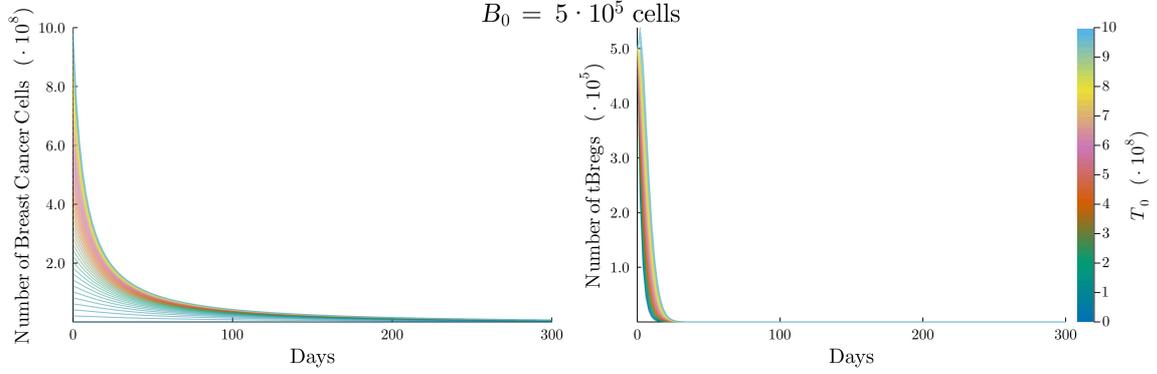}}%
     \caption{Ensemble simulations of IVP, with $T_0$ taking values in the 50-point discretisation of the interval $\left[ 0, 10^{9} \right] \, \cdot \, \text{cells}$, $c = 3 \cdot 10^{-10}\; \text{cell}^{-1}  \, \cdot \, \text{day}^{-1}$ and $B_0 = 5 \cdot 10^5$ cells. Left: Number of breast cancer cells. Right: Number of tBregs. We notice that breast cancer cells are quickly depleted from the body when $c = 3 \cdot 10^{-10}\; \text{cell}^{-1}  \, \cdot \, \text{day}^{-1} \; (c > c_1)$ and $B_0 = 5 \cdot 10^5$ cells.}
  \label{Fig10}
\end{figure}

However, the same does not hold if the initial number of tBergs is higher. When the initial condition of IVP becomes equal to equilibrium point $E_3$, which in this case is equal to $\left( 8 \cdot 10^8, 6 \cdot 10^7, 7.63 \cdot 10^9, 1.24 \cdot 10^6 \right) \; \cdot \; \text{cells}$, we get \hyperref[Fig11]{Figure \ref*{Fig11}}. In \hyperref[Fig11]{Figure \ref*{Fig11}}, we observe that when the initial condition of breast cancer cells is lower than the value of breast cancer cells at $E_3$, the tumour is eliminated, whereas if it is higher, the tumour reaches its carrying capacity. Naturally, if it is equal to the value of breast cancer cells at $E_3$, then the solution is constant. Even though the trajectories that move towards the tumour carrying capacity, do not actually move towards an equilibrium point, a sort of bistability phenomenon appears, as far as the population of breast cancer cells is concerned.

\begin{figure}[h!]
 \makebox[\textwidth][c]{\includegraphics[width=1\textwidth]{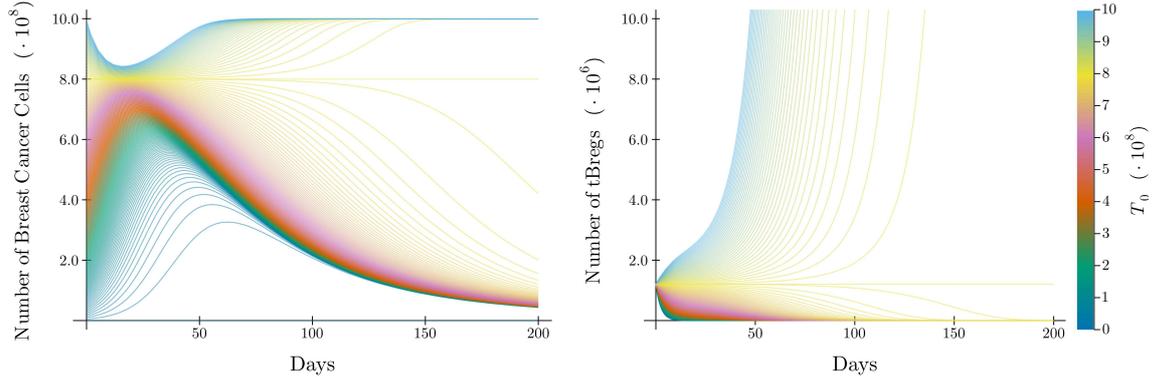}}%
     \caption{Ensemble simulations of IVP, with $T_0$ taking values in the 200-point discretisation of the interval $\left[ 0, 10^{9} \right] \, \cdot \, \text{cells}$, $c = 3 \cdot 10^{-10}\; \text{cell}^{-1}  \, \cdot \, \text{day}^{-1}$ and  $\left( T_0, N_0, R_0, B_0 \right) = E_3$ cells. Left: Number of breast cancer cells. Right: Number of tBregs.  We notice that if $T_0$ is higher than the corresponding value of $T$ in $E_3$ then $T \nearrow \frac{1}{b}$, whereas if $T_0$ is lower than the corresponding value of $T$ in $E_3$ then $T \searrow 0$. }
  \label{Fig11}
\end{figure}

Throughout \hyperref[sec:tBregsExist]{Section \ref*{sec:tBregsExist}}, we observed the importance of tBergs in the dynamics of IVP. Each time tBreg levels were sufficiently high, breast cancer cells reached their carrying capacity. In the presence of tBregs, breast tumour can reach its carrying capacity independently of the value of parameter $c$. Hence, we reach the following conclusion:
 
\noindent\fbox{\parbox{\textwidth}{\begin{center}\textbf{Conclusion 3:\\} In the presence of tBregs, if tumour carrying capacity is bounded from below, then  the breast tumour will eventually reach its carrying capacity due to the effect of an increasing population of tBregs, i.e. $T \nearrow \frac{1}{b}$ when $B \nearrow$.  \end{center}}} \label{conclusionC}

\subsubsection{The Importance of $c$}
\label{sec:importc}
Our bifurcation analysis, along with our numerical simulations, underlined the importance of the rate at which NK cells lyse breast cancer cells. A way to numerically visualise the bifurcation diagram of \hyperref[Fig3]{Figure \ref*{Fig3}} is with \hyperref[Fig12]{Figure \ref*{Fig12}}. In \hyperref[Fig12]{Figure \ref*{Fig12}}, we let $c$ take a range of values between  $4 \cdot 10 ^{-11}$ and $4 \cdot 10^{-10} \; \cdot \; \text{cell}^{-1}  \, \cdot \, \text{day}^{-1}$, while keeping the initial condition constant. As expected from our stability and numerical analysis, when $c<c_1$, breast cancer cells increase until they reach their carrying capacity. Additionally, when $c_1<c<c_2$ equilibrium point $E_2$ is stable, and the value of breast cancer cells at the equilibrium decreases, as we increase the value of $c$, until it reaches 0, when $c>c_2$.

\begin{figure}[h!]
 \makebox[\textwidth][c]{\includegraphics[width=1\textwidth]{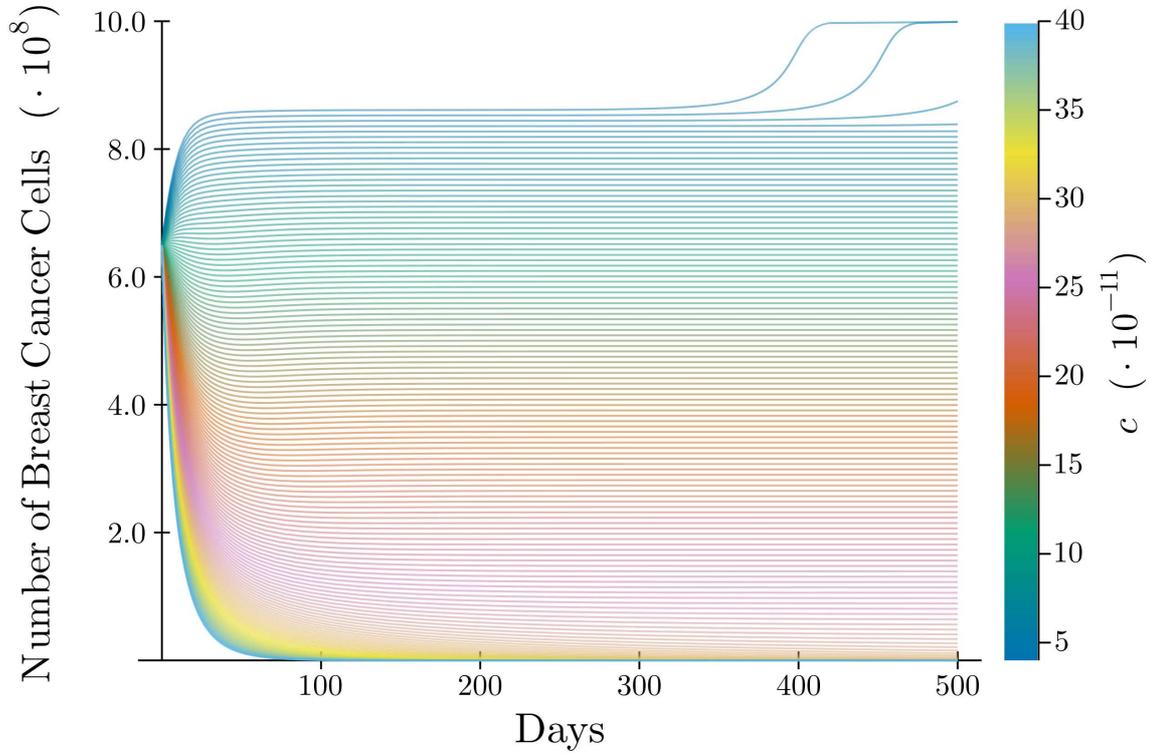}}%
     \caption{Ensemble simulations of IVP, with $c$ taking values in the 150-point discretisation of the interval $\left[ 4 \cdot 10 ^{-11}, 4 \cdot 10^{-10} \right] \; \cdot \; \text{cell}^{-1}  \, \cdot \, \text{day}^{-1}$, $m_T = m_T^B$,   $T_0 = 6.5 \cdot 10^8$ cells, and $B_0 = 50$ cells. The phenomenon of tumour dormancy is evident, since there exist stable equilibrium points where breast-cancer-cell levels are very low.}
  \label{Fig12}
\end{figure}

Additionally, when the value of $c$ is a bit smaller than $c_2$, our model predicts breast cancer cells decreasing to a very small, but nonzero, amount. This clinically undetectable population could seize its proliferation, until the tumour micro-environmental conditions are suitable for its growth. This phenomenon is known as cancer dormancy and it happens to 20–45\% of breast cancer patients \cite{aguirre2007models}. 

To sum up, similarly to \hyperref[ConclusionA]{Conclusion $1_a$} we have that:

\noindent\fbox{\parbox{\textwidth}{\begin{center}\textbf{Conclusion $1_p$:\\} In the presence of tBregs, the breast tumour will reach its carrying capacity due to NK cell insufficiency, i.e. $T \nearrow \frac{1}{b}$ when $c \searrow 0$.  \end{center}}} \label{ConclusionA'}

\subsection{Sensitivity Analysis}
In this section, we perform numerical parameter sensitivity analysis in order to bring light to the dependence of the solution of our model to its parameters. Our approach is to fix all parameter values but one, which we increase and decrease by 10\%, and measure the percent change of the breast cancer cell population after 200 days, when compared to the same value, but for our original parameter. We execute this procedure for four different tBregs initial conditions: $B_0 = 10^6,\; 6 \cdot 10^6, \; 11 \cdot 10^6 \; \text{and} \; 11.5 \cdot 10^6$ cells. Additionally, we have that $T_0 = 8 \cdot 10^8$ cells, $c =3 \cdot 10^{-10}\; \text{cell}^{-1}  \, \cdot \, \text{day}^{-1}$, $m_T = 5 \cdot 10^{-10}\; \text{cell}^{-1}  \, \cdot \, \text{day}^{-1}$.

The results are given in \hyperref[Fig13]{Figure \ref*{Fig13}}. In most of our experiments, the parameter showing the biggest sensitivity is, the rate at which NK cells lyse cancer cells, $c$. The more we increase $c$, the biggest the decrease in final tumour size after 200 days, further supporting \hyperref[ConclusionA]{Conclusion $1_a$} and  \hyperref[ConclusionA']{Conclusion $1_p$}. Another parameter showing high sensitivity is, the tumour growth rate, $a$, which is to be expected. Additionally, the rest of parameters directly involving NK cells, namely, the constant source of NK cells, $\sigma$, the rate of programmable NK cell death, $\theta_N$, and the rate of NK cell death due to Tregs, $\gamma$, all show considerable high sensitivity, which underlines the anti-tumour effect of NK cells. As far as the sensitivity of the parameters regarding Tregs is concerned, it remains almost the same, no matter the initial conditions of tBregs. On the contrary, the parameters regarding tBregs, namely, the rate of tBreg-induced Treg activation, $m_B$, the rate of programmable tBreg cell death, $\theta_B$, and the rate of breast-cancer-induced tBreg activation, $m_T$, all show increasing sensitivity, as we increase the initial number of tBregs. This comes as a further confirmation of \hyperref[conclusionB]{Conclusion 2} and \hyperref[conclusionC]{Conclusion 3}, since the higher the number of tBregs in the body, the bigger the resulting tumour, and therefore the probability of cancer metastasis is increased.

\begin{figure}[h!]
 \makebox[\textwidth][c]{\includegraphics[width=1\textwidth]{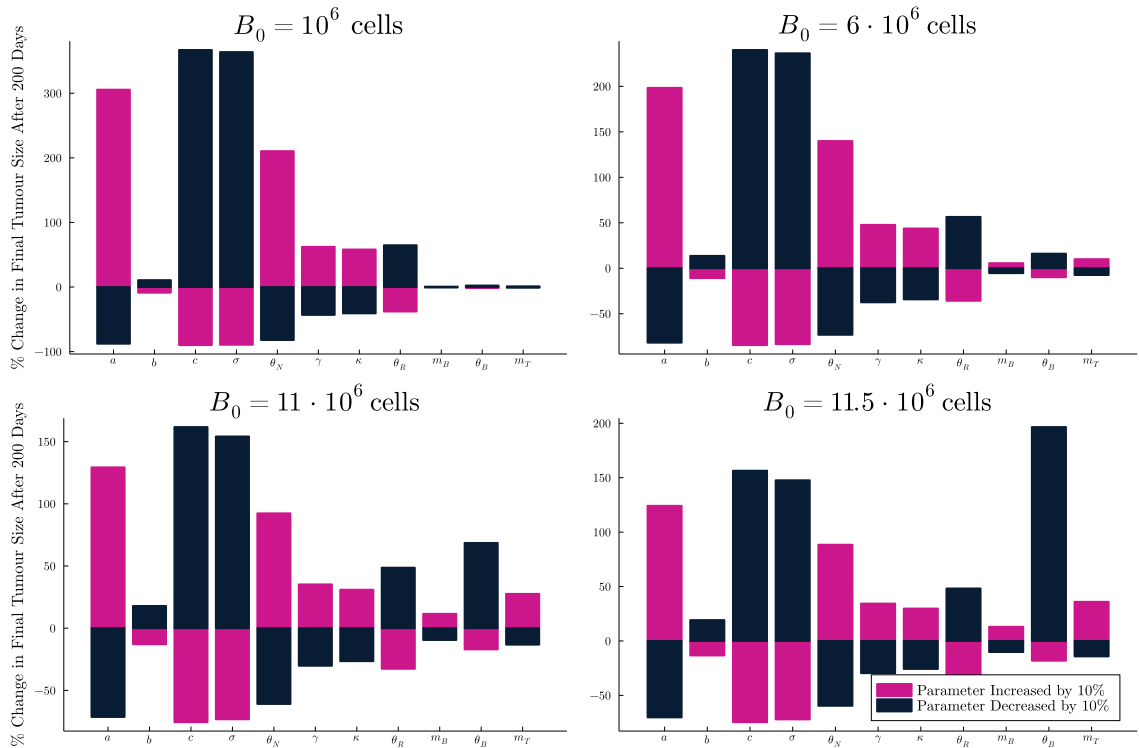}}%
     \caption{Numerical sensitivity of the parameters of IVP, with $T_0 = 8 \cdot 10^8$ cells, $c =3 \cdot 10^{-10}\; \text{cell}^{-1}  \, \cdot \, \text{day}^{-1}$, $m_T = 5 \cdot 10^{-10}\; \text{cell}^{-1}  \, \cdot \, \text{day}^{-1}$. The percentage change in final tumour size after 200 days is plotted. We notice that in all experiments, the parameters regarding NK cells show very high sensitivity, whereas the parameters regarding tBregs show increasing sensitivity as we increase $B_0$.} 
  \label{Fig13}
\end{figure}

\section{Conclusion and Discussion} \label{Discussion}

In this study, we developed a model of non-linear ordinary differential equations, with the goal of capturing the dynamics between breast cancer cells, NK cells, Tregs and the newly discovered tBregs. An introductory approach was taken place, where the functional responses of our model were chosen to be linear (i.e. Holling's type I). 

We showed the existence of three biologically realistic equilibria: an equilibrium with no cancer cells, an equilibrium with cancer cells and no tBregs, and an equilibrium with both cancer cells and tBregs. Using the linearisation and the center manifold theorem, we gave conditions regarding the local stability of each equilibrium.

Using bifurcation analysis, we showed the importance of the rate of NK-induced tumour death, $c$, - independently of the presence or absence of tBregs - on the stability and the existence of the equilibria, which we further proved using our numerical simulations, arriving at \hyperref[ConclusionA]{Conclusion $1_a$} and \hyperref[ConclusionA']{Conclusion $1_p$}.

Moreover, we showed how the sign of $\frac{\theta_B}{m_T} - \frac{1}{b}$ can change the number of equilibria of our model and, therefore, the dynamics of breast cancer growth. On top of that, we proved that when $\frac{\theta_B}{m_T} > \frac{1}{b}$, our model has no closed orbits, meaning that breast cancer will either be cleared by NK cells or stabilise around a constant number.

Additionally, we performed numerical simulations, in which our model was able to capture some interesting behaviours of breast cancer. On the one hand, breast cancer cells decreased for a period of time and after reaching a minimum value, they started to increase. An opposite behaviour was also observed, with cancer cells increasing and reaching a peak, after which  NK cells finally managed to clear the cancer cell population (\hyperref[Fig8]{Figure \ref*{Fig8}} and \hyperref[Fig11]{Figure \ref*{Fig11}}). On the other hand, our model was able to capture the phenomenon of cancer dormancy, due to the existence of a stable equilibrium, in which the number of breast cancer cells is very small (\hyperref[Fig3]{Figure \ref*{Fig3}} and \hyperref[Fig12]{Figure \ref*{Fig12}}). 

Throughout our numerical simulations, the ability of tBregs to cause breast cancer cells to reach their carrying capacity was underlined, thus making us reach \hyperref[conclusionB]{Conclusion 2} and \hyperref[conclusionC]{Conclusion 3}. In fact, our simulations showed that without tBregs, breast cancer cells can only be stabilised around two of the three equilibria admitted by our analysis ($E_1$ and $E_2$). Furthermore, even in the cases where the tumour micro-environmental conditions did not allow tBregs to grow, but only decrease - which happens when $\frac{\theta_B}{m_T} > \frac{1}{b}$ - a high initial number of tBregs caused breast tumour to reach its carrying capacity. Thus, the important takeaway of this study is that when tBregs do not exist in the body, the only way for the breast tumour to reach its carrying capacity is for the rate of NK-induced breast cancer cell death to be tending to 0, whereas if tBregs exist in the body, a sufficiently large number of tBregs can cause the breast tumour to reach its carrying capacity, regardless of the rate of NK-induced breast cancer cell death. Taking all these into account, a potential tBreg-depleting therapy could minimise the ability of breast cancer to grow and metastasise. 
 
Finally, our numerical sensitivity analysis emphasised \hyperref[conclusionB]{Conclusion 2} and \hyperref[conclusionC]{Conclusion 3}, due to our model's high sensitivity to the parameter regarding the NK-induced death of breast cancer cells, as well as the increasing sensitivity to parameters regarding tBregs, as we increased the initial condition of tBregs.

In spite of our model's ability to be used as a framework within which we can study breast cancer growth with respect to NK cells, Tregs and tBregs, there do exist some limitations. Choosing linear functional responses has the advantage of making the model easier to approach, however this makes our results potentially less realistic. Additionally, as tBregs are recently discovered, data regarding their kinetics do not currently exist; this drove us to estimate the value of the corresponding parameters. However, our findings could serve as a catalyst to more research regarding the function and kinetics of tBregs, which in turn could help further refine the proposed model and, therefore, the understanding of the role of tBregs in breast cancer growth and metastasis.

\section*{Acknowledgement}
The authors would like to thank the reviewers for their thoughtful comments and efforts towards improving our manuscript.

\backmatter

\begin{appendices}

\section{Preliminary Results} \label{AppendixSecPropSol}

In this section, we prove that IVP (we remind that this acronym is used for the initial value problem  $\left\{\eqref{reducedModel},\eqref{reducedModel_ICs}\right\}$) has a unique solution, which is non-negative for non-negative initial conditions and for positive parameter values. We also prove that our solution is global, i.e. it does not explode for some finite positive value of time $t$. The above conditions are necessary in order to assure that IVP yields biologically realistic results.

\begin{secproposition}[Uniqueness and non-negativity]
\label{unnneg}
For every $\left(T_0,N_0,R_0,B_0\right)\in \mathbb{R}_{\ge 0}^4$, IVP  has a unique (local) solution $\left(T, N, R, B\right):\left[0,\tau\right)\to\mathbb{R}_{\ge 0}^4$ for some $\tau>0$.

\end{secproposition}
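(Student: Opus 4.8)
The plan is to separate the statement into its two independent ingredients: local existence–uniqueness, which is essentially automatic, and non-negativity, which carries the real content. For the first, I would note that the right-hand side $\mathbf{G}$ of \eqref{reducedModel} is a polynomial map $\mathbb{R}^4\to\mathbb{R}^4$, hence $C^\infty$ and in particular locally Lipschitz on every ball. The Picard--Lindel\"of theorem then furnishes a unique solution on a maximal interval $[0,\tau)$ with $\tau>0$; no further work is needed for existence or uniqueness, and the remaining task is to show this solution cannot leave $\mathbb{R}_{\ge0}^4$.

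For non-negativity the key observation is that, once the solution is known to exist, each of the four scalar equations can be read as a linear or affine ODE in a single unknown with continuous coefficients, and the structure of the coefficients forces the correct sign. Equations \eqref{dT} and \eqref{dB} are homogeneous in their own variable,
\[
\dv{T}{t}=\bigl(a-abT-cN\bigr)T
\qquad\text{and}\qquad
\dv{B}{t}=\bigl(m_TT-\theta_B\bigr)B,
\]
so, treating the bracketed terms as the continuous coefficient functions supplied by the solution on $[0,\tau)$, integration gives $T(t)=T_0\,\e^{\int_0^t(a-abT-cN)\,\de s}$ and $B(t)=B_0\,\e^{\int_0^t(m_TT-\theta_B)\,\de s}$. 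Since the exponentials are strictly positive and $T_0,B_0\ge 0$, we obtain $T\ge0$ and $B\ge0$ on all of $[0,\tau)$, irrespective of the signs of $N$ and $R$.

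Equations \eqref{dN} and \eqref{dR} are instead affine in their own variable with strictly positive constant sources $\sigma$ and $\kappa$, e.g.\ $\dv{R}{t}=\kappa+(m_BB-\theta_R)R$. Variation of constants yields
\[
R(t)=\e^{\int_0^t(m_BB-\theta_R)\,\de s}\Bigl(R_0+\int_0^t\kappa\,\e^{-\int_0^s(m_BB-\theta_R)\,\de u}\,\de s\Bigr),
\]
together with the analogous formula for $N$ having source $\sigma$ and decay rate $\theta_N+\gamma R$. Because the exponential weights are positive, $R_0,N_0\ge0$ and $\kappa,\sigma>0$, every term is non-negative, so $N\ge0$ and $R\ge0$ throughout $[0,\tau)$.

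The one delicate point is that the coefficient functions in these representations are not given a priori; they depend on the very solution whose sign is at issue. This is settled by bootstrapping: Picard--Lindel\"of first produces a continuous solution on $[0,\tau)$, after which each coefficient $a-abT-cN$, $m_TT-\theta_B$, $m_BB-\theta_R$ and $-\theta_N-\gamma R$ is a genuine continuous function of $t$, legitimising the scalar integrating-factor formulas above. Equivalently, one may avoid explicit formulas and invoke positive invariance of $\mathbb{R}_{\ge0}^4$, since on each coordinate face the field is inward-pointing or tangent: $\dv{T}{t}=0$ when $T=0$, $\dv{B}{t}=0$ when $B=0$, $\dv{N}{t}=\sigma>0$ when $N=0$, and $\dv{R}{t}=\kappa>0$ when $R=0$. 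Either route shows the solution maps $[0,\tau)$ into $\mathbb{R}_{\ge0}^4$, which completes the statement. I expect the sign bookkeeping for $N$ and $R$ to be the only step requiring genuine care, precisely because their right-hand sides are not sign-definite until the positivity of the source terms is brought to bear.
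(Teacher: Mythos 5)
Your proposal is correct and follows essentially the same route as the paper: Picard--Lindel\"of for local existence and uniqueness, followed by coordinate-wise integrating-factor representations along the already-constructed solution to force non-negativity. The only cosmetic differences are that the paper treats \eqref{dT} as a Bernoulli equation rather than a linear homogeneous one with solution-dependent coefficient, and bounds $N$ and $R$ from below via Gr\"onwall's inequality instead of writing the exact variation-of-constants formulas.
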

\begin{proof}
It is easy to see that the conditions of the Picard-Lindelöf theorem are fulfilled, since every function of the right-hand side of system \eqref{reducedModel} is continuous, just like its partial derivative with respect to every variable. Thus, we have that there exists a unique solution to IVP. In fact, we can extended the solution and consider it in the maximal non-negative interval of existence.

Next, we prove the non-negativity  of the solution. Rewriting \eqref{dT} in the following form
\begin{equation*}
    {\dv{T}{t}}(t) +(cN(t)-a)T(t)=-abT^2(t) \,, 
\end{equation*}
we notice that we have a Bernoulli equation for the variable $T$, thus its solution is
\begin{equation*}
    T(t) = \frac{T_0 e^{\int_0^t \left( a-cN(s) \right) \, \mathrm{d}s}}{1+T_0ab \int_0^t \! e^{{\int_0^s \left( a-cN(\xi) \right) \, \mathrm{d}\xi}} \,  \mathrm{d}s} \,, 
\end{equation*}
which is non-negative, if $T_0$ is non-negative.

Using the fact that $\sigma > 0$, we turn our attention to \eqref{dN}. We have that
\begin{equation*}
  {\dv{N}{t}}(t) > -(\gamma R(t) + \theta_N)N(t) \,,    
\end{equation*}
and using Grönwall's inequality, we have that
\begin{equation*}
   N(t) \ge N_0e^{- \int_0^t \left( \gamma R(s) + \theta_N \right) \, \mathrm{d}s} \,,    
\end{equation*}
which means that $N(t)\ge0$, when $N_0\ge0$.

Using a similar method as above, from \eqref{dR} we get
\begin{equation*}
    R(t) \ge R_0e^{\int_0^t \left( m_BB(s)-\theta_R \right) \, \mathrm{d}s} \,,   
\end{equation*}
which means that $R(t)\ge0$, when $R_0\ge0$.

We then use the separation of variables method to solve \eqref{dB} for the variable $B$. Its solution is
\begin{equation}
\label{Bfor}
    B(t) = B_0e^{\int_0^t  \left( m_TT(s) - \theta_B \right) \, \mathrm{d}s} \,.
\end{equation}
Clearly, if $B_0\ge0$, then $B(t)\ge0$.

\end{proof}

\begin{secproposition}[Boundedness of $T$]
\label{BofT}
The set $\left[0,1/b\right]$ is positively invariant for the component $T$ of the solution of IVP. 
\end{secproposition}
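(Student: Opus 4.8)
The plan is to treat the two bounds separately. The lower bound $T(t)\ge 0$ is already in hand: it is precisely the non-negativity of the $T$-component established in Proposition \ref{unnneg}. So the entire task reduces to proving the upper bound $T(t)\le 1/b$ whenever $T_0\le 1/b$, for as long as the solution exists.

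The guiding observation is what the vector field does on the face $T=1/b$. Substituting $T=1/b$ into \eqref{dT} gives
\begin{equation*}
\dv{T}{t}=a\cdot\tfrac1b\bigl(1-b\cdot\tfrac1b\bigr)-cN\cdot\tfrac1b=-\frac{cN}{b}\le 0,
\end{equation*}
where the inequality uses $c,b>0$ together with $N\ge 0$ from Proposition \ref{unnneg}. Thus on the boundary $T=1/b$ the flow never points outward, which is exactly the infinitesimal statement that $[0,1/b]$ is forward invariant for $T$.

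To turn this into a rigorous argument and sidestep the awkward tangential case (when $N=0$ exactly at a putative crossing), I would lean on the explicit representation of $T$ already derived in the proof of Proposition \ref{unnneg}. Writing $\phi(t):=\int_0^t\bigl(a-cN(s)\bigr)\,\de s$, that formula reads $T(t)=T_0e^{\phi(t)}/\bigl(1+abT_0\int_0^t e^{\phi(s)}\,\de s\bigr)$, and the elementary identity $e^{\phi(t)}=1+\int_0^t\phi'(s)e^{\phi(s)}\,\de s$ lets one reduce the desired inequality $bT(t)\le 1$ (the denominator being positive) to
\begin{equation*}
bT_0-1\le bcT_0\int_0^t N(s)\,e^{\phi(s)}\,\de s.
\end{equation*}
The right-hand side is non-negative (all factors are, again by $N\ge 0$), while the left-hand side is $\le 0$ precisely because $T_0\le 1/b$; the bound then follows with no case distinction.

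An equally clean fallback I would keep in reserve is a scalar comparison argument: since $N\ge 0$ and $c>0$ force $\dv{T}{t}\le aT(1-bT)$ for $T\ge 0$, the solution is dominated by the pure logistic solution $u$ with $u(0)=T_0$, whose explicit form keeps $u(t)\le 1/b$. The only mild subtlety in either route is the bookkeeping that the inequalities hold throughout the maximal interval of existence rather than merely at a single time, but this is immediate once non-negativity of $N$ is invoked as a standing fact. I expect no genuine obstacle; the difficulty is purely in selecting the least fussy way to promote the boundary sign computation into a global statement, and the explicit Bernoulli formula does this most economically.
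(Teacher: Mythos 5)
Your proposal is correct, but your primary route differs from the paper's. The paper argues by comparison: since $N\ge 0$ forces $\dv{T}{t}\le aT(1-bT)$, it invokes the scalar comparison theorem against the auxiliary logistic problem started at $y_2(0)=1/b$, whose solution is the constant $y_2\equiv 1/b$, giving $T(t)\le 1/b$ immediately without solving anything explicitly. Your main argument instead exploits the explicit Bernoulli representation of $T$ from Proposition \ref{unnneg}: the identity $e^{\phi(t)}=1+\int_0^t\phi'(s)e^{\phi(s)}\,\de s$ does reduce $bT(t)\le 1$ to $bT_0-1\le bcT_0\int_0^t N(s)e^{\phi(s)}\,\de s$, and the sign check is sound (I verified the algebra: the $abT_0\int_0^t e^{\phi(s)}\,\de s$ terms cancel exactly). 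Your approach is more self-contained — it needs no comparison theorem and avoids any tangency discussion — at the cost of being tied to the particular closed form of the Bernoulli solution; the paper's approach is shorter and generalises to perturbations of the $T$-equation for which no explicit formula is available, though it quietly relies on a comparison theorem for a non-autonomous scalar field (the right-hand side depends on $t$ through $N(t)$). Your fallback, dominating $T$ by the logistic solution with the \emph{same} initial datum $T_0$, is essentially the paper's argument up to the choice of initial condition for the comparison solution; the paper's choice $y_2(0)=1/b$ is marginally cleaner since the dominating solution is then constant.
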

\begin{proof}
Since the solution of IVP is non-negative, we have from \eqref{dT} that
\begin{equation*}
    {\dv{T}{t}}(t)  = \underbrace{aT(1-bT) - cNT}_{q_1(T)} \le \underbrace{aT(1-bT)}_{q_2(T)} \; . 
\end{equation*}
We assume the following two initial value problems:
\begin{align}
    & \dv{T}{t}  = q_1(T), \quad  T(0)=T_0 \ge 0 \quad \text{ and} \\
    & \dv{y_2}{t} = q_2(y_2), \quad y_2(0) = \frac{1}{b} \; \label{2ndIVPComparison}.
\end{align}
Assuming that $T_0 \le y_2(0) = 1/b $, and since $q_1, q_2$ are Lipschitz functions on $\mathbb{R}$ that satisfy the inequality $ q_1(T) \le  q_2(T)$, from the comparison theorem we have that $T(t) \le y_2(t)$ for $t$ in the maximal non-negative interval of existence of the solution of IVP. Solving initial value problem \eqref{2ndIVPComparison}, yields $y_2 = 1/b$. Hence, $T(t) \le 1/b$, with the assumption that the initial value of $T$ is smaller or equal to $1/b$.
\end{proof}

\begin{secproposition}[Globality]
If $T_0 \le 1/b$, then the solution of IVP is global.
\end{secproposition}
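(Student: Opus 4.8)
The plan is to invoke the standard continuation (escape) principle for ODEs with locally Lipschitz right-hand side: the unique solution furnished by \hyperref[unnneg]{Proposition \ref*{unnneg}} is defined on a maximal interval $[0,\tau)$, and if $\tau<\infty$ then the solution must leave every compact subset of $\mathbb{R}_{\ge 0}^4$ as $t\to\tau^-$. Thus it suffices to show that, under the hypothesis $T_0\le 1/b$, each of the four components stays bounded on every finite time interval contained in $[0,\tau)$; this contradicts $\tau<\infty$ and forces $\tau=\infty$.

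First I would dispose of the two components that are controlled directly. By \hyperref[BofT]{Proposition \ref*{BofT}}, the hypothesis $T_0\le 1/b$ gives $0\le T(t)\le 1/b$ throughout $[0,\tau)$, so $T$ is bounded outright. For $N$, the non-negativity of $R$ and $N$ from \hyperref[unnneg]{Proposition \ref*{unnneg}} yields $\dot N\le \sigma-\theta_N N$ via \eqref{dN}, and a comparison (or Grönwall) argument bounds $N$ above by $\max\{N_0,\sigma/\theta_N\}$; together with $N\ge 0$ this confines $N$ to a fixed compact interval independent of $t$.

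Next I would exploit the explicit formula \eqref{Bfor}. Since $T\le 1/b$, the integrand satisfies $m_T T(s)-\theta_B\le m_T/b-\theta_B$, so $0\le B(t)\le B_0\,e^{(m_T/b-\theta_B)t}$. This grows at most exponentially, hence is finite on every bounded interval $[0,\mathcal{T}]$ with $\mathcal{T}<\tau$; in particular $B$ cannot escape to infinity in finite time.

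The only genuinely coupled estimate, and the step I expect to be the main obstacle, is the bound on $R$, whose equation \eqref{dR} contains the destabilising term $+m_B B R$. The idea is to feed in the bound just obtained for $B$: fixing any $\mathcal{T}<\tau$ and setting $M_B\coloneqq\sup_{[0,\mathcal{T}]}B$, equation \eqref{dR} gives $\dot R\le \kappa+m_B M_B R$ on $[0,\mathcal{T}]$, a linear differential inequality with constant coefficients. Grönwall's inequality then yields a bound on $R$ over $[0,\mathcal{T}]$ depending only on $\mathcal{T}$, $R_0$, $\kappa$, $m_B$ and $M_B$, so $R$ stays finite on every finite interval. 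With all four components bounded on each such $[0,\mathcal{T}]$, the escape principle is contradicted unless $\tau=\infty$, which is precisely the assertion.
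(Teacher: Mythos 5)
Your proposal is correct and follows essentially the same route as the paper: bound $T$ via Proposition \ref{BofT}, bound $B$ by $B_0e^{(m_T/b-\theta_B)t}$, feed that into a Gr\"onwall estimate for $R$, bound $N$ by comparison with $\sigma-\theta_N N$, and conclude globality from boundedness on compact intervals. The only cosmetic differences are that you freeze $\sup_{[0,\mathcal{T}]}B$ as a constant in the $R$ estimate where the paper keeps $\int_0^t B$ in the exponent, and that you state the continuation principle explicitly where the paper leaves it implicit.
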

\begin{proof}
From \eqref{dB} and using the fact that $T(t) \le 1/b$, we have that
\begin{equation*}
    {\dv{B}{t}}(t) \le -\theta_BB(t)+\frac{m_T}{b}B(t) \,,
\end{equation*}
and by using Grönwall's inequality, we have that
\begin{equation*}
    B(t) \le B_0e^{(\frac{m_T}{b}-\theta_B)t} \,.
\end{equation*}

Moreover, from \eqref{dR} and using the fact that $\theta_RR \ge 0 $ we get
\begin{equation*}
    {\dv{R}{t}}(t) \le \kappa + m_BB(t)R(t) \,,
\end{equation*}
and by using Grönwall's inequality, we have that
\begin{equation*}
    R(t) \le e^{m_B \int_0^t \! B(s)  \, \mathrm{d}s} \left(R_0 + \kappa \int_0^t \! e^{ - m_B \int_0^\xi \! B(s)  \, \mathrm{d}s} \, \mathrm{d}\xi  \right) \,.
\end{equation*}

Finally, from \eqref{dN} and using the fact that $ \gamma RN \ge 0 $ we get
\begin{equation*}
    {\dv{N}{t}}(t)\le \sigma - \theta_N N(t) \,,
\end{equation*}
and by using Grönwall's inequality, we have that
\begin{equation*}
    N(t) \le \frac{\sigma}{\theta_N} - \frac{\sigma}{\theta_N}e^{-\theta_Nt} + e^{-\theta_Nt}N_0 \,.
\end{equation*}

Since the solution is bounded on any compact non-negative interval, we deduce its (positive) globality. 
\end{proof}


\section{Parameter Estimation}\label{secB1}
Here we explain our reasoning behind our choice of parameters. Most of the parameters in our model have been chosen based on methods and data that can also be found in \cite{bitsouni2021mathematical}.  

\subsection{The Tumour} \label{sectionParameterEstimationTumourReducedModel}
Based on the data fitting experiments conducted in \cite{bitsouni2021mathematical}, we chose the logistic function to model the breast cancer growth, with the tumour growth rate being $a = 0.15$ day$^{-1}$  and the inverse of the tumour carrying capacity being $b = 1 \cdot 10^{-9}$ cell$^{-1}$. The cell lines used for the estimation of these parameters are CN34BrM, MDA-231 and SUM1315.

\subsection{The NK Cells} \label{reducedModelNKParameterEstimation}
Healthy young adults have a total NK production rate of (15 ± 7.6)·10$^6$ cell · litre$^{-1}$ · day$^{-1}$, while healthy older adults have one of (7.3 ± 3.7)·10$^6$ cells · litre$^{-1}$ · day$^{-1}$ \cite{zhang}. Since the average amount of blood in the human body is about 5 litre \cite{starr2012biology}, the constant source of NK cells is in the range 
\begin{equation*}
    \sigma \in \left[  1.8 \cdot10^7 \text{ cell} \cdot \text{day}^{-1},\,  1.13 \cdot10^8 \text{ cell} \cdot \text{day}^{-1} \right]\,.
\end{equation*}

The half-life of NK cells in humans is 1 to 2 weeks \cite{zhang} which, assuming exponential decay of NK cells, yields a range for $\theta_N$ of $({\frac{\ln2}{14}}, {\frac{\ln2}{7}})=(0.049, 0.099)$. Here, we choose an NK cells half-life of 11 days with a corresponding programmable NK death rate of
\begin{equation*}
    \theta_N = \frac{\ln2}{11 \text{ day}} \approx 6.301 \cdot 10^{-2} \text{ day}^{-1}\,.
\end{equation*}

Approximately 4 to 29\% of circulating lymphocytes are NK cells \cite{hema}. The average number of lymphocytes per microlitre is 1000 to 4800 cells \cite{abbas2014cellular}, and since the average human has an average of 5 litres of blood, we have that the total population of lymphocytes in a human is 5·10$^9$ to 24·10$^9$ cells. Therefore, the total population of NK cells in blood is $N_{min}=$ 2·10$^8$ to $N_{max}=$ 6.96·10$^9$ cells. At the healthy equilibrium, our model suggests the population of NK cells to be  $\frac{\sigma  \theta _R}{\gamma  \kappa +\theta _N \theta _R}$, which means that
\begin{equation*}
    N_{min} \le \frac{\sigma  \theta _R}{\gamma  \kappa +\theta _N \theta _R} \le N_{max} \Leftrightarrow   \frac{\theta_R ( \sigma - \theta_N )}{N_{max} \kappa} \le \gamma \le \frac{\theta_R ( \sigma - \theta_N )}{N_{min} \kappa}\,.
\end{equation*}
Replacing the minimum value of $\sigma$, and the maximum value of $\theta_N$ and $\kappa$ in the above inequality yields the minimum value of $\gamma$, while replacing the maximum value of $\sigma$, and the minimum value of $\theta_N$ and $\kappa$ (we derive a range for $\kappa$ in \hyperref[reducedModelTregParameterEstimation]{Appendix \ref*{reducedModelTregParameterEstimation}}), yields the maximum value of $\gamma$. Finally, after calculating the above two quantities, the resulting range for the parameter $\gamma$ is
\begin{equation*}
   \gamma \in \left[   1.796 \cdot 10^{-12} \text{ cell}^{-1} \cdot \text{day}^{-1} , 4.52 \cdot 10^{-9} \text{ cell}^{-1} \cdot \text{day}^{-1} \right]\,.
\end{equation*}

\subsection{The Tregs} \label{reducedModelTregParameterEstimation}
For the constant source of Tregs, our model suggests that a healthy organism has an average number of $ \kappa/\theta_R $ Tregs, since this is the coordinate which corresponds to Tregs in the healthy equilibrium. From \cite{pang2013frequency} we get that Tregs are 5 to 10\% of the total CD4$^+$ T cells population circulating in blood, while from \cite{abbas2014cellular} we get that the percentage of CD4$^+$ T cells among the total population of circulating lymphocytes ranges from 50 to 60\%. Hence, the percentage of Tregs among the total population of circulating lymphocytes is 2.5 to 6\%. The average number of lymphocytes per microlitre is 1000 to 4800 cells \cite{abbas2014cellular}, and since the average human has an average of 5 litres of blood, we have that the total population of lymphocytes in a human is 5·10$^9$ to 24·10$^9$ cells. Therefore, the total population of Tregs in blood is 1.25·10$^8$ to 1.44·10$^9$ cells. Solving the equation
\begin{equation}
    \textit{total population of Tregs} = \frac{\kappa}{\theta_R}, \notag
\end{equation}
for $\kappa$ and replacing the range of values for the total population of Tregs as found above and the value of $\theta_R$ as found in the following paragraph, we finally get the constant source of Tregs to be in the interval
\begin{equation}
    \kappa \in [4.8137 \cdot 10^6 \text{ cell·day$^{-1}$},5.5454 \cdot 10^7 \text{ cell·day$^{-1}$}]\,. \notag
\end{equation}

The half-life of Tregs is found to be about 18 days \cite{mabarrack2008}. Thus, assuming Tregs follow exponential decay we have that
\begin{equation}
    \theta_R = \frac{\ln2}{18 \text{ day}} \approx 3.851 \cdot 10^{-2} \text{ day}^{-1}. \notag
\end{equation}

The rate of NK cell death due to Tregs, $\gamma$, is assumed to be
\begin{equation*}
    \gamma = 1 \cdot 10^{-10} \text{ cell}^{-1} \cdot \text{day}^{-1}.
\end{equation*}

\subsection{The tBregs}
As B cells are less studied than T cells and NK cells, we are not able to find the half-life of Bregs, let alone tBregs, as they are recently discovered. So, we estimate the rate of programmable tBreg cell death to be around $\theta_B = 0.4 \text{ day}^{-1}$.

The same holds for the rate of breast-cancer-induced tBreg activation, $m_T$, which we assume it to be either  $m_T = 5.2 \cdot 10^{-15}$ cell$^{-1}$ $\cdot$ day$^{-1}$ or  $m_T = 5 \cdot 10^{-10}$ cell$^{-1}$ $\cdot$ day$^{-1}$.

\end{appendices}
\bibliographystyle{abbrv}
\bibliography{mybibfile}\label{bibliography}
\addcontentsline{toc}{chapter}{Bibliography}

\end{document}